\DeclareMathOperator{\re}{Re}
\DeclareMathOperator{\im}{Im}
\DeclareMathOperator{\Sp}{Sp}
\DeclareMathOperator{\Texp}{Texp}
\DeclareMathOperator{\sgn}{sgn}
\DeclareMathOperator{\tah}{th}
\DeclareMathOperator{\ch}{ch}
\DeclareMathOperator{\sh}{sh}
\DeclareMathOperator{\arctg}{arctg}
\DeclareMathOperator{\arsh}{arsh}
\DeclareMathOperator{\arch}{arch}
\DeclareMathOperator{\arth}{arth}
\DeclareMathOperator{\arcth}{arcth}
\newcommand{\lan}{\langle}
\newcommand{\ran}{\rangle}
\newcommand{\e}{\varepsilon}
\newcommand{\vf}{\varphi}
\newcommand{\s}{\sigma}
\newcommand{\Si}{\Sigma}
\newcommand{\al}{\alpha}
\newcommand{\be}{\beta}
\newcommand{\ga}{\gamma}
\newcommand{\Ga}{\Gamma}
\newcommand{\de}{\delta}
\newcommand{\De}{\Delta}
\newcommand{\la}{\lambda}
\newcommand{\La}{\Lambda}
\newcommand{\ups}{\upsilon}
\newcommand{\spx}{\mathbf{x}}
\newcommand{\spy}{\mathbf{y}}
\newtheorem{thm}{Theorem}
\theoremstyle{definition}
\newtheorem{defn}{Definition}
\begin{document}
\setlength{\unitlength}{1pt}
\allowdisplaybreaks[4]


\title{{\Large \textbf{Large mass expansion of the one-loop effective action induced\\
by a scalar field on the two-dimensional Minkowski background\\[-4pt]
with non-trivial $(1+1)$ splitting}}}

\date{}

\author{P.O. Kazinski\thanks{E-mail: \texttt{kpo@phys.tsu.ru}},\; V.D. Miller\thanks{E-mail: \texttt{nl\_mvd@mail.ru}}\\[0.5em]
{\normalsize Physics Faculty, Tomsk State University, Tomsk 634050, Russia}}

\maketitle

\begin{abstract}

A large mass expansion of the one-loop effective action of a scalar field on the two-dimensional Minkowski spacetime is found in the system of coordinates, where the metric $g_{\mu\nu}(t,x)\neq\eta_{\mu\nu}=diag(1,-1)$, and $g_{\mu\nu}(t,x)$ tends to $\eta_{\mu\nu}$ at the spatial and temporal infinities. It is shown that, apart from the Coleman-Weinberg potential, this expansion contains the terms both analytic and non-analytic in $m^{-2}$, where $m$ is the mass of a scalar field. A general unambiguous expression for the one-loop correction to the effective action on non-stationary backgrounds is derived.

\end{abstract}


\section{Introduction}

The evaluation of one-loop corrections to the effective action (the generating functional of one-particle irreducible Green functions) on the background of averages of quantum fields is a rather developed branch of theoretical physics. In order to find the one-loop correction, one needs to evaluate a certain Fredholm determinant using one or another powerful method developed for calculation of such determinants. The background field method \cite{DeWGAQFT.11,BuchOdinShap.11,WeinB2} simply relates the vacuum effective action on a given background to the effective action on a trivial background (the Minkowski background, in the case of gravity). As a rule, the background field method is a more powerful and consistent procedure to find the effective action than the immediate summation of the Feynman diagrams. It sums an infinite number of diagrams, and the result, in most cases, turns out to be essentially non-perturbative, i.e., non-analytic in the coupling constant.

In this paper, we investigate the dependence of the one-loop effective action calculated by using the background field method on the choice of a smooth globally defined foliation of the spacetime on the space and time, viz., on the choice of the time variable that enters the definition of the quantum Hamiltonian. Any systematic study of this question in the relativistic quantum field theory (QFT) framework with an accurate construction of the representation of the algebra of observables in the Fock space seems to be absent in the literature. We may only distinguish the papers by Kucha\v{r} \cite{KucharI,KucharII}, where the self-consistency of quantum gauge algebra of a parametrized massless scalar field on a two-dimensional Minkowski background was investigated. In particular, the quantum gravitational anomaly was found in these papers. It explicitly depends on the Killing vector defining the Hamiltonian. This anomaly is canceled out by the appropriate counterterm (by the redefinition of the quantum gauge algebra) which explicitly depends on the Killing vector. The issue of locality of this counterterm was not investigated in \cite{KucharI,KucharII}. Other gravitational anomalies were found in \cite{AlvaWitt.12,KamenLyakh}.

A possible reason for the absence of studies in this direction is that the formal expression for the one-loop correction to the effective action (see, e.g., \cite{BirDav.11}),
\begin{equation}\label{trace_gen}
    \Ga^{(1)}\sim\int_0^\infty\frac{d\tau}{\tau}\Sp e^{-i\tau (H-i0)},
\end{equation}
following from the so-called Schwinger variational principle \cite{DeWGAQFT.11}, looks as generally covariant. Here $H$ is a wave operator of a hyperbolic type, for instance, the Klein-Gordon operator. One may naively think that if one transforms the metric entering the wave operator according to the tensor law and the measure in the trace definition then the expression \eqref{trace_gen} does not change. However, the problem is that \eqref{trace_gen} is ill-defined since
\begin{enumerate}
  \item The operator $H$ realized on smooth square-integrable functions on the spacetime (it is that Hilbert space which is usually implied in \eqref{trace_gen}) possesses the spectrum unbounded from above and below;
  \item The operator under the trace sign in \eqref{trace_gen} is not trace-class.
\end{enumerate}
The first property says that the trace \eqref{trace_gen} as a function of $\tau$ can be defined only for pure imaginary $\tau$ and only in the sense of distributions of $\tau$. This is a rather serious obstacle for a correct evaluation of the effective action with the help of \eqref{trace_gen} since the trace in \eqref{trace_gen} is integrated with a non-compact function of $\tau$. Notice that the mathematically rigorous theory for the evaluation of traces of the operators of the form \eqref{trace_gen} is elaborated only for self-adjoint operators $H$ with the spectrum bounded from above or below \cite{Gilkey.6,Gilkey2.11,BerezMSQ1.4}. However, a severer problem comes from the second property which implies that the value of the trace in \eqref{trace_gen} depends on the choice of the basis (a sequence of the complete set of vectors in the Hilbert space), where the trace is evaluated, and for some bases it is not defined. This problem was cursorily mentioned in \cite{DeWGAQFT.11}, Vol. 2, p. 555, and with more details in \cite{GFSh.3}, Sect. 6.2, 6.3. Notice that this problem is not directly related to the ultraviolet divergencies in QFT that arise from the integration over $\tau$ in \eqref{trace_gen}. It is also clear that this problem cannot be cured by addition of any complex constant to $H$. The standard means to resolve this problem is to select the ``correct'' basis where the trace \eqref{trace_gen} should be evaluated, i.e., the problem of choice of the set of mode functions appears. The existence of such an ambiguity in the definition of the effective action is to be expected since, in order to construct relativistic QFT, one needs, apart from the wave operator defining the Heisenberg equations in the one-loop approximation, to specify the representation of the field operators in the Fock space, i.e., to define the splitting of the operators onto positive- and negative-frequency parts at every instant of time (see, e.g., \cite{DeWpaper.9}). The explicit realization of this construction will be given in Sect. \ref{Hamilt_Diag}. Let us stress that this problem is inherent to relativistic QFT. This issue is absent in non-relativistic many particle quantum problems where there is no need to split the operators onto positive- and negative-frequency part. This problem is also absent for models with a finite number of degrees of freedom.

In order that \eqref{trace_gen} be invariant under the action of diffeomorphisms, it is also necessary, in making the generally coordinate transformations, to transform the structure (as we shall see, the time-like vector field) distinguishing, with the aid of the Hamiltonian, the ``correct'' basis in the Hilbert space. Therefore, it is to be expected that the one-loop effective action depends not only on the background dynamical fields but also on the non-dynamical structure specifying the splitting of the operators onto positive- and negative-frequency parts. It is the main aim of this paper to study of this dependence. In contrast to \cite{KucharI,KucharII}, we shall consider the massive scalar field in the large mass limit. This will allow us, in particular, to discard the contributions of the gravitons and ghosts to the effective action since, in the one-loop approximation, these contributions are independent from the mass of a scalar field and cannot cancel the scalar field contribution.

In Sect. \ref{Hamilt_Diag}, the general procedure of diagonalization of the positive-definite quantum-field Hamiltonian for bosonic field is developed. It is shown that the Hamiltonian specifies a natural splitting of the field operators onto positive- and negative-frequency parts. In Sect. \ref{Scal_Field}, the general procedure of the Hamiltonian diagonalization is applied to the massive scalar field on a globally hyperbolic metric background. The unambiguous expression for the one-loop correction to the effective action induced by this field is given. We also prove the theorem that the imaginary part of the one-loop effective action is invariant under the metric diffeomorphisms keeping intact the initial and final Cauchy surfaces and the spatial infinity. Starting from Sect. \ref{Chang_Coord}, we consider QFT of a massive scalar field on the two-dimensional Minkowski background. In Sect. \ref{Chang_Coord}, a smooth change of coordinates identical at the spatial and temporal infinities is constructed in the Minkowski spacetime with the metric $\eta_{\mu\nu}=diag(1,-1)$. This change of coordinates is such constructed that we can explicitly, in a certain approximation, find the one-loop effective action induced by the scalar field in the new system of coordinates. As a result, the evaluation of the effective action is reduced to finding the transition amplitude for the one-dimensional Schr\"{o}dinger equation (Sect. \ref{Schr_Equation}) with the potential from a certain class. In Sect. \ref{Schr_Equation}, we also obtain the explicit expressions for the divergent and finite parts of the effective action. In particular, it is shown that the contribution standing at $m^{-2}$ in the large mass expansion of the effective action is not zero. In Sect. \ref{Example}, we prove that the real part of effective action contains exponentially suppressed non-analytic in $m^{-2}$ contributions for a certain family of the potentials from the class mentioned above. It turns out that there are no exactly solvable family of potentials in the class of potentials considered (for our purposes, it is necessary to find a family of potentials with certain properties, see Sect. \ref{Schr_Equation}, rather than one potential). Therefore, we consider a family of exactly solvable potentials that approximates with a good accuracy a certain family of potentials from the required class and prove that the corrections to the one-loop effective action caused by the potential difference are small and cannot cancel the non-analytic in $m^{-2}$ terms. In conclusion, we formulate the main results. In App. \ref{Second_Quant}, we provide the necessary formulas from \cite{BerezMSQ1.4} and derive the well-defined expression for the one-loop correction to the effective action. In App. \ref{Dens_Stat}, the formal derivation of the relation between the $S$-matrix and the density of states of a quantum system is given. In App. \ref{Adiabatic_Invariant}, some analytic properties and asymptotics of the adiabatic invariant as a function of the complex energy are studied. In App. \ref{Reg_Renorm}, we collect some necessary facts from renormalization theory. In order to understand the conclusions made in the corresponding section, the reader is advised to become acquainted with the background field method (see, e.g., \cite{DeWGAQFT.11,BuchOdinShap.11,WeinB2}).

Briefly, the main result of this paper can be formulated as follows. It is well-known that the one-loop correction to the effective action induced by a massive scalar field on the two-dimensional Minkowski spacetime with the metric $\eta_{\mu\nu}$ is given by the Coleman-Weinberg potential \cite{ColWein.9}. If one makes a smooth change of coordinates, constructs QFT in this new system of coordinates, and finds the one-loop correction then one may expect, assuming the invariance of the effective action under the metric diffeomorphisms, that one again obtains the Coleman-Weinberg potential with the transformed metric. However, as we shall see, this is not the case. The additional terms both analytic and non-analytic in $m^{-2}$ arise in the effective action apart from the Coleman-Weinberg potential. The non-analytic terms cannot be canceled out by the counterterms.

\section{Diagonalization of the Hamiltonian}\label{Hamilt_Diag}

Consider in detail the Hamiltonian diagonalization procedure in relativistic QFT. The procedure of the Hamiltonian diagonalization expounded below is an infinite dimensional analogue of the procedure given in \cite{BabBuld,MaslovCWKB,BBTZ,BBT.1}. For the Friedmann-Lema\^{\i}tre metric, it is elaborated in \cite{GriMaMos.11}.

Let the Hamiltonian of a quantum-field system in the Schr\"{o}dinger representation be
\begin{equation}\label{Hamilt_quadr}
    \hat{H}(t)=\frac12 \hat{Z}^AH_{AB}(t)\hat{Z}^B,\qquad \hat{Z}^A=\left[
                                           \begin{array}{c}
                                             \hat{\phi}(\spx) \\
                                             \hat{\pi}(\spx) \\
                                           \end{array}
                                         \right],\qquad
    [\hat{Z}^A,\hat{Z}^B]=iJ^{AB}=\left[
                        \begin{array}{cc}
                          0 & i \\
                          -i & 0 \\
                        \end{array}
                      \right]\de(\spx-\spy).
\end{equation}
It is supposed that
\begin{equation}
    \bar{H}_{AB}=H_{BA}=H_{AB},
\end{equation}
and $H_{AB}$ defines the positive-definite quadratic form. Henceforth, the bar over the expression means a complex conjugation, unless otherwise stated.

Let us pose the eigenvalue problem for the non-singular self-conjugate operator $-iJ_{AB}=(iJ^{AB})^{-1}$ with respect to the quadratic form $H_{AB}$:
\begin{equation}
    -iJ_{AB}\ups^B_\al(t)=\mu_\al(t) H_{AB}(t)\ups^B_\al(t),
\end{equation}
where $\ups_\al^A$ obey certain boundary conditions dictated by the physical problem statement. Inasmuch as $J_{AB}$ is non-singular, $\mu_\al\neq0$. It follows from the reality of $H_{AB}$ that the eigenvalues comes into pairs $(\mu_\al,-\mu_\al)$, moreover if the vector $\ups_\al$ corresponds to the eigenvalue $\mu_\al$ then the vector $\bar{\ups}_\al^A$ corresponds to the eigenvalue $-\mu_\al$. Therefore, further we suppose that $\mu_\al>0$ and
\begin{equation}\label{eigen_prblm}
    -iJ_{AB}\ups^B_\al(t)=\mu_\al(t) H_{AB}(t)\ups^B_\al(t),\qquad iJ_{AB}\bar{\ups}^B_\al(t)=\mu_\al(t) H_{AB}(t)\bar{\ups}^B_\al(t).
\end{equation}
The orthogonality and completeness relations are equivalent to
\begin{equation}\label{orth_complt}
    \{\ups_\al,\ups_\be\}=\{\bar{\ups}_\al,\bar{\ups}_\be\}=0,\quad\{\ups_\al,\bar{\ups}_\be\}=-i\de_{\al\be},\qquad iJ^{AB}=\sum_\al(\ups_\al^A\bar{\ups}_\al^B-\bar{\ups}_\al^A\ups_\al^B),
\end{equation}
where $\{\ups,w\}:=J_{AB}\ups^A w^B$, and the normalization of eigenfunctions is chosen to be
\begin{equation}
    \bar{\ups}^A_\al H_{AB}\ups_\al^B=\ups^A_\al H_{AB}\bar{\ups}_\al^B=\mu_\al^{-1}.
\end{equation}
In other words, the vectors ($\ups_\al$,$\bar{\ups}_\al$) constitute the symplectic basis.

The operators,
\begin{equation}\label{creaannh_oper}
    \hat{a}_\al(t):=\{\bar{\ups}_\al(t),\hat{Z}\},\qquad \hat{a}^\dag_\al(t):=\{\ups_\al(t),\hat{Z}\},
\end{equation}
obey the creation-annihilation operators commutation relations
\begin{equation}
    [\hat{a}_\al,\hat{a}^\dag_\be]=\de_{\al\be},\qquad [\hat{a}_\al,\hat{a}_\be]=0=[\hat{a}^\dag_\al,\hat{a}^\dag_\be].
\end{equation}
Also the following representation holds
\begin{equation}\label{ZinAAd}
    \hat{Z}^A=-i\sum_\al(\ups_\al^A\hat{a}_\al-\bar{\ups}_\al^A\hat{a}^\dag_\al).
\end{equation}
As we see, the Hamiltonian \eqref{Hamilt_quadr} specifies a natural splitting of the field operators onto positive- and negative-frequency parts.

Putting $\omega_\al(t):=\mu_\al^{-1}(t)$ and substituting \eqref{ZinAAd} into \eqref{Hamilt_quadr}, we obtain
\begin{equation}
    \hat{H}=\frac12\sum_\al\omega_\al(t)[\hat{a}^\dag_\al(t)\hat{a}_\al(t)+\hat{a}_\al(t) \hat{a}^\dag_\al(t)].
\end{equation}
This operator is not defined in the Fock space with the vacuum annihilated by $\hat{a}_\al(t)$ provided $\sum_\al\omega_\al(t)$ diverges, which usually occurs. Let us introduce the regularized Hamilton operator
\begin{equation}\label{Hamilt_reg}
    \hat{H}_\La(t)=\frac12\sum_\al^\La\omega_\al(t)[\hat{a}^\dag_\al(t)\hat{a}_\al(t)+\hat{a}_\al(t) \hat{a}^\dag_\al(t)]=\sum_\al^\La \omega_\al(t)\hat{a}^\dag_\al(t)\hat{a}_\al(t)+\frac12\sum_\al^\La\omega_\al(t),
\end{equation}
where the sum over $\al$ is carried over those eigenvalues that correspond to the energies $\omega_\al(t)$ less than $\La$. It is clear that
\begin{equation}
    \hat{H}_\La\underset{\La\rightarrow+\infty}{\rightarrow}\hat{H}.
\end{equation}
The divergencies arising in the regularization removal limit must be canceled out by addition of the appropriate counterterms to the initial classical action.

Now we pass to the basis in the Fock space which is constituted by the eigenvectors of the instantaneous Hamiltonian $\hat{H}_\La(t)$. To this end, we introduce the unitary operator $\hat{W}_{t,t_{in}}$ such that (cf. \cite{Kato})
\begin{equation}
\begin{gathered}
    \hat{a}_\al(t_{in})=\hat{W}_{t,t_{in}}\hat{a}_\al(t)\hat{W}_{t_{in},t},\quad \hat{a}^\dag_\al(t_{in})=\hat{W}_{t,t_{in}}\hat{a}^\dag_\al(t)\hat{W}_{t_{in},t},\quad
    |out\ran:=\hat{W}_{t_{out},t_{in}}|out,t_{out}\ran,\quad|in\ran:=|in,t_{in}\ran,\\
    \lan out,t_{out}|\hat{U}_{t_{out},t_{in}}|in,t_{in}\ran=\lan out|\hat{S}_{t_{out},t_{in}}|in\ran,\qquad \hat{S}_{t,t_{in}}:=\hat{W}_{t,t_{in}}\hat{U}_{t,t_{in}}.
\end{gathered}
\end{equation}
where $\hat{U}_{t,t_{in}}$ is the evolution operator generated by the Hamiltonian $\hat{H}_\La(t)$. The necessary and sufficient conditions for the existence of the operator $\hat{W}_{t,t_{in}}$ are given in the theorem \ref{canon_unit_thm} in App. \ref{Second_Quant}. The operator $\hat{S}_{t,t_{in}}$ obeys the equation
\begin{equation}\label{Smatr_energy}
    i\partial_t\hat{S}_{t,t_{in}}=\Big\{i\partial_t\hat{W}_{t,t_{in}}\hat{W}_{t_{in},t}+ \frac12\sum_\al^\La\omega_\al(t)[\hat{a}^\dag_\al(t_{in})\hat{a}_\al(t_{in})+\hat{a}_\al(t_{in}) \hat{a}^\dag_\al(t_{in})]\Big\}\hat{S}_{t,t_{in}}.
\end{equation}
In order to find the explicit expression for the first term in the curly brackets, we employ the fact that
\begin{equation}
    i\dot{\hat{a}}_\al(t)=[\hat{a}_\al(t),\hat{W}_{t_{in},t}i\partial_t\hat{W}_{t,t_{in}}],\qquad i\dot{\hat{a}}^\dag_\al(t)=[\hat{a}^\dag_\al(t),\hat{W}_{t_{in},t}i\partial_t\hat{W}_{t,t_{in}}].
\end{equation}
Then
\begin{equation}
    i\partial_t\hat{W}_{t,t_{in}}=:\hat{W}_{t,t_{in}}\hat{K}(t,\hat{a}(t),\hat{a}^\dag(t))=\hat{W}_{t,t_{in}}\hat{K}(t,\hat{a}(t),\hat{a}^\dag(t))\hat{W}_{t_{in},t}\hat{W}_{t,t_{in}} =\hat{K}(t,\hat{a}(t_{in}),\hat{a}^\dag(t_{in}))\hat{W}_{t,t_{in}}.
\end{equation}
For the operator
\begin{equation}
    \hat{K}(t,\hat{a}(t),\hat{a}^\dag(t))=\frac12\big[2\hat{a}^\dag(t) C(t)\hat{a}(t)+a(t) \bar{A}(t) \hat{a}(t)+\hat{a}^\dag(t) A(t) \hat{a}^\dag\big]
\end{equation}
we deduce that
\begin{equation}
    i\dot{\hat{a}}(t)=[\hat{a}(t),\hat{K}(t)]=C(t)\hat{a}(t)+A(t)\hat{a}^\dag(t),\qquad i\dot{\hat{a}}^\dag(t)=[\hat{a}^\dag(t),\hat{K}(t)]=-\bar{A}(t)\hat{a}(t)-\bar{C}(t)\hat{a}^\dag(t).
\end{equation}
The solution to these equations takes the form
\begin{equation}
    \left[
      \begin{array}{c}
        \hat{a}(t) \\
        \hat{a}^\dag(t) \\
      \end{array}
    \right]=
    \left[
      \begin{array}{cc}
        F(t) & G(t) \\
        \bar{G}(t) & \bar{F}(t) \\
      \end{array}
    \right]
    \left[
      \begin{array}{c}
        \hat{a}(t_{in}) \\
        \hat{a}^\dag(t_{in}) \\
      \end{array}
    \right]=:\mathcal{A}(t)
    \left[
      \begin{array}{c}
        \hat{a}(t_{in}) \\
        \hat{a}^\dag(t_{in}) \\
      \end{array}
    \right].
\end{equation}
On the other hand, with the aid of \eqref{orth_complt}, it follows from \eqref{ZinAAd} taken at the instants of time $t_{in}$ and $t$ that
\begin{equation}
    F_{\al\be}=-i\{\bar{\ups}_\al(t),\ups_\be(t_{in})\},\qquad G_{\al\be}=i\{\bar{\ups}_\al(t),\bar{\ups}_\be(t_{in})\}.
\end{equation}
Consequently,
\begin{equation}
    \left[
      \begin{array}{cc}
        C(t) & A(t) \\
        -\bar{A}(t) & -\bar{C}(t) \\
      \end{array}
    \right]=i\dot{\mathcal{A}}(t)\mathcal{A}^{-1}(t)=
    \left[
      \begin{array}{cc}
        \{\dot{\bar{\ups}}_\al(t),\ups_\be(t)\} & -\{\dot{\bar{\ups}}_\al(t),\bar{\ups}_\be(t)\} \\
        \{\dot{\ups}_\al(t),\ups_\be(t)\} & -\{\dot{\ups}_\al(t),\bar{\ups}_\be(t)\} \\
      \end{array}
    \right],
\end{equation}
where the completeness relation \eqref{orth_complt} of the mode functions was used. As a result, we can write \eqref{Smatr_energy} as
\begin{multline}\label{Smatr_energy_1}
    i\partial_t\hat{S}_{t,t_{in}}=\Big\{\frac12\sum_\al^\La\omega_\al(t)[\hat{a}^\dag_\al(t_{in})\hat{a}_\al(t_{in})+\hat{a}_\al(t_{in}) \hat{a}^\dag_\al(t_{in})]\\ +\hat{a}^\dag_\al(t_{in})\{\dot{\bar{\ups}}_\al,\ups_\be\}\hat{a}_\be(t_{in}) -\frac12\hat{a}_\al(t_{in})\{\dot{\ups}_\al,\ups_\be\}\hat{a}_\be(t_{in}) -\frac12\hat{a}^\dag_\al(t_{in})\{\dot{\bar{\ups}}_\al,\bar{\ups}_\be\}\hat{a}_\be^\dag(t_{in}) \Big\}\hat{S}_{t,t_{in}}.
\end{multline}

It is useful to express the skew-scalar products entering the last expression through the derivatives of the quadratic form $H_{AB}$ with respect to time. Differentiating \eqref{eigen_prblm} with respect to $t$ and contracting with the mode functions, we obtain
\begin{equation}
    \{\dot{\ups}_\al,\ups_\be\}=\frac{\ups_\al^A\dot{H}_{AB}\ups_\be^B}{i(\omega_\al+\omega_\be)}, \qquad\{\dot{\bar{\ups}}_\al,\ups_\be\}_{\al\neq\be}=\frac{\bar{\ups}_\al^A\dot{H}_{AB}\ups_\be^B}{i(\omega_\be-\omega_\al)}.
\end{equation}
For $\al=\be$, we have the relations
\begin{equation}
    \ups_\al^A H_{AB}\bar{\ups}_\al^B=\omega_\al,\qquad \ups_\al^A \dot{H}_{AB}\bar{\ups}_\al^B=\dot{\omega}_\al.
\end{equation}
Whence, differentiating the first equality, we come to
\begin{equation}
    \im\{\dot{\bar{\ups}}_\al,\ups_\al\}=0,
\end{equation}
for $\omega_\al\neq0$. The fact that $\re\{\dot{\bar{\ups}}_\al,\ups_\al\}$ is not expressed through the derivatives of the quadratic form $H_{AB}$ is a consequence of invariance of the system of equations \eqref{eigen_prblm}, \eqref{orth_complt} under the transformations
\begin{equation}
    \ups_\al^A(t)\rightarrow e^{i\vf_\al(t)}\ups_\al^A(t).
\end{equation}
We show in App. \ref{Second_Quant} that the one-loop $in$-$out$ effective action is invariant under these transformations, and so $\re\{\dot{\bar{\ups}}_\al,\ups_\al\}$ can be taken to be equal to any smooth function of $t$ for every $\al$.

\section{Scalar field}\label{Scal_Field}

The action for a massive scalar field on the background with the metric $g_{\mu\nu}$ has the standard form
\begin{equation}
    S[\phi]=\frac12\int d^Dx\sqrt{|g|}(\partial_\mu\phi g^{\mu\nu}\partial_\nu\phi-m^2\phi^2).
\end{equation}
Let us introduce the momentum
\begin{equation}\label{Legandre_trans}
    \pi:=\frac{\partial \mathcal{L}}{\partial\dot{\phi}}=\sqrt{|g|}(g^{00}\dot{\phi}+g^{0i}\partial_i\phi),
\end{equation}
where $\dot{\phi}=\partial_t\phi$, and $\mathcal{L}$ is the Lagrangian density. Making the Legendre transform, we obtain the Hamiltonian density
\begin{multline}\label{Hamilt_scal}
    \mathcal{H}=\pi\dot{\phi}-\mathcal{L}=\frac12\Big[\frac{\pi^2}{g^{00}\sqrt{|g|}} -2\frac{\pi g^{0i}\partial_i\phi}{g^{00}} -\sqrt{|g|}(\tilde{g}^{ij}\partial_i\phi\partial_j\phi-m^2\phi^2) \Big]=\\
    =\frac12\Big[\frac{(\pi-\sqrt{|g|}g^{0i}\partial_i\phi)^2}{g^{00}\sqrt{|g|}}-\sqrt{|g|}(g^{ij}\partial_i\phi\partial_j\phi-m^2\phi^2 )\Big],
\end{multline}
where $\tilde{g}^{ij}=g^{ij}-g^{0i}g^{0j}/g^{00}=(g_{ij})^{-1}$. Obviously, the definition of the Hamiltonian density depends explicitly on a choice of the time variable $t$. The last expression for $\mathcal{H}$ shows that if $g^{00}>0$ and $g^{ij}$ is negative-definite then the Hamiltonian defines the positive-definite quadratic form $H_{AB}$ in the notation of Sect. \ref{Hamilt_Diag}. These conditions are sufficient but not necessary for a positive definiteness of $H_{AB}$. Notice that if $g_{\mu\nu}$ is the Minkowski metric written in the curvilinear coordinates then the Hamiltonian constructed with the help of the Hamilton density \eqref{Hamilt_scal} does not coincide with the generator of translations along time of the Poincar\'{e} algebra.

The quadratic form $H_{AB}$ is written as
\begin{equation}
    H_{AB}=\left[
             \begin{array}{cc}
               \partial_i\sqrt{|g|}\tilde{g}^{ij}\partial_j+\sqrt{|g|}m^2 & \partial_i\frac{g^{i0}}{g^{00}} \\
               -\frac{g^{0i}}{g^{00}}\partial_i & \frac{1}{g^{00}\sqrt{|g|}} \\
             \end{array}
           \right].
\end{equation}
The eigenvalue problem \eqref{eigen_prblm} takes the form
\begin{equation}
    \left[
       \begin{array}{c}
         \partial_i(\sqrt{|g|}\tilde{g}^{ij}\partial_ju_\al)+\sqrt{|g|}m^2u_\al+\partial_i(\frac{g^{i0}}{g^{00}}w_\al) \\
         -\frac{g^{0i}}{g^{00}}\partial_iu_\al+\frac{w_\al}{g^{00}\sqrt{|g|}} \\
       \end{array}
     \right]=i\omega_\al\left[
                          \begin{array}{c}
                            w_\al \\
                            -u_\al \\
                          \end{array}
                        \right],\qquad\ups^A_\al(t)=\left[
                                                   \begin{array}{c}
                                                     u_\al(t) \\
                                                     w_\al(t) \\
                                                   \end{array}
                                                 \right].
\end{equation}
Combining these expressions, we come to the equations
\begin{equation}\label{KG_stat}
    \Big[(p_i+\omega_\al g_i)\sqrt{|g|}g^{ij}(p_j+\omega_\al g_j)+\sqrt{|g|}\big(\frac{\omega^2_\al}{g_{00}}-m^2\big)\Big]u_\al=0,\qquad w_\al=-i\sqrt{|g|}\Big[g_i g^{ij}(p_j+\omega_\al g_j)+\frac{\omega_\al}{g_{00}}\Big]u_\al,
\end{equation}
where $p_i=-i\partial_i$ and $g_i:=g_{0i}/g_{00}$. In fact, we need to solve the first equation, which is nothing but the stationary Klein-Gordon equation (see, e.g.,  \cite{psfss.10,KalKaz.11,KalKaz2.12}), supplied with the proper boundary conditions. Notice that the metric, the eigenfunctions $u_\al$, $w_\al$, and the eigenvalues $\omega_\al$ depend on the time $t$.

Our goal is to investigate the dependence of the one-loop $in$-$out$ effective action on the choice of the time variable in the background field method framework. This one-loop correction is expressed through the vacuum-vacuum transition amplitude (see, e.g., \cite{BirDav.11,DeWGAQFT.11,GFSh.3,WeinB2})
\begin{equation}
    i\Ga^{(1)}_{in-out}=\ln\lan out,t_{out}|\hat{U}_{t_{out},t_{in}}|in,t_{in}\ran,
\end{equation}
where $|in,t_{in}\ran$ and $|out,t_{out}\ran$ are the vacuum states which we take as the ground eigenvectors of the Hamiltonians $\hat{H}(t_{in})$ and $\hat{H}(t_{out})$, respectively. In particular, making use of the representation \eqref{Smatr_energy_1} and formulas \eqref{vac_vac}, \eqref{vac_vac_expl}, we deduce
\begin{equation}\label{inout_eff_act}
    i\Ga^{(1)}_{in-out}=\ln\lan out|\hat{S}_{t_{out},t_{in}}|in\ran=-\frac{1}{2}\ln\det[\bar{\Phi}(t_{out})V(t_{out})]-\frac{i}2\int_{t_{in}}^{t_{out}}d\tau\sum_\al^\La\omega_\al(\tau),
\end{equation}
where the operators $\Phi(t)$, $V(t)$, $C(t)$, and $A(t)$ are given in \eqref{unit_canon}, \eqref{V_oper}, and \eqref{CA_energy}. The first term in \eqref{inout_eff_act} can be calculated by the use of the perturbation theory presented at the end of App. \ref{Second_Quant}. This term is absent when $A(t)=0$. The second term is the contribution of the instantaneous energy of zero-point fluctuations. In the case when the background is stationary and the time variable is chosen such that the Killing vector has the form $\partial_t$, this term describes the Casimir effect. Evaluating this term on the appropriate background, one can find, for example, all the one-loop corrections in quantum electrodynamics \cite{DeWGAQFT.11}. In order to calculate this term, one can employ the high-temperature expansion technique elaborated in \cite{Fursaev1.10,Fursaev2.10,KalKaz.11,KalKaz2.12} for arbitrary stationary backgrounds. Since $\omega_\al(t)$ are found from the stationary Klein-Gordon equation \eqref{KG_stat}, which has the same form as [(4.4), \cite{KalKaz2.12}], the formulas obtained in \cite{KalKaz2.12} remain intact with the only difference that now the metric and the whole expression depend on $t$. The theorem \ref{evol_symb_thm} gives the sufficient conditions for \eqref{inout_eff_act} to exist and be well-defined at the fixed $\La$. Unfortunately, we have not succeeded in joining in a consistent way (i.e., up to the term independent from the background field) the two terms in \eqref{inout_eff_act} into one as in the formal expression \eqref{trace_gen}.

Now we pass into the Heisenberg representation
\begin{equation}
\begin{gathered}
    \hat{a}(in)=\hat{U}_{0,t_{in}}\hat{a}(t_{in})\hat{U}_{t_{in},0},\qquad \hat{a}^\dag(in)=\hat{U}_{0,t_{in}}\hat{a}^\dag(t_{in})\hat{U}_{t_{in},0},\\
    \hat{a}(out)=\hat{U}_{0,t_{out}}\hat{a}(t_{out})\hat{U}_{t_{out},0},\qquad \hat{a}^\dag(out)=\hat{U}_{0,t_{out}}\hat{a}^\dag(t_{out})\hat{U}_{t_{out},0},
\end{gathered}
\end{equation}
where $\hat{a}(t_{in})$ and $\hat{a}(t_{out})$ are the annihilation operators constructed according to \eqref{creaannh_oper} for $t=t_{in}$ and $t=t_{out}$, respectively. It is clear that the states
\begin{equation}
    |\overline{in}\ran:=\hat{U}_{0,t_{in}}|in,t_{in}\ran,\qquad |\overline{out}\ran:=\hat{U}_{0,t_{out}}|out,t_{out}\ran,
\end{equation}
are the vacuum ones for the operators $\hat{a}(in)$ and $\hat{a}(out)$, respectively, and
\begin{equation}
    \lan out,t_{out}|\hat{U}_{t_{out},t_{in}}|in,t_{in}\ran=\lan \overline{out}|\overline{in}\ran.
\end{equation}
Let
\begin{equation}\label{Heis_eqs}
    \hat{Z}^A(t):=\hat{U}_{0,t}\hat{Z}^A\hat{U}_{t,0},\qquad i\dot{\hat{Z}}^A(t)=[\hat{Z}^A(t),\hat{H}(t)],
\end{equation}
where $\hat{H}(t)$ is written in the Heisenberg representation. From \eqref{ZinAAd}, we obtain
\begin{equation}\label{Zoutin}
    \hat{Z}^A(t_{out})=-i\sum_\al\big[\ups_\al^A(t_{out})\hat{a}_\al(out)-\bar{\ups}_\al^A(t_{out})\hat{a}^\dag_\al(out)\big],\qquad \hat{Z}^A(t_{in})=-i\sum_\al\big[\ups_\al^A(t_{in})\hat{a}_\al(in)-\bar{\ups}_\al^A(t_{in})\hat{a}^\dag_\al(in)\big].
\end{equation}
On the other hand, introducing the commutator Green function
\begin{equation}
    \tilde{G}^A_{\ B}(t,t'):=[\hat{Z}^A(t),\hat{Z}^B(t')],
\end{equation}
and using the commutation relations \eqref{Hamilt_quadr}, we can write
\begin{equation}\label{Zout}
    \hat{Z}^A(t_{out})=-i\tilde{G}^A_{\ B}(t_{out},t_{in}) \hat{Z}^B(t_{in}).
\end{equation}
Notice that $\tilde{G}^A_{\ B}(t,t')$ is a $c$-number for quadratic Hamiltonians \eqref{Hamilt_quadr}, i.e., in the one-loop approximation. It follows from the relations \eqref{creaannh_oper}, \eqref{Zoutin}, \eqref{Zout} that
\begin{equation}\label{canon_trans_expl}
\begin{split}
    \hat{a}(out)&=-\bar{\ups}_\al^A(t_{out})\tilde{G}_{AB}(t_{out},t_{in})\ups^B_\be(t_{in})\hat{a}_\be(in) +\bar{\ups}_\al^A(t_{out})\tilde{G}_{AB}(t_{out},t_{in})\bar{\ups}^B_\be(t_{in})\hat{a}^\dag_\be(in),\\
    \hat{a}^\dag(out)&=-\ups_\al^A(t_{out})\tilde{G}_{AB}(t_{out},t_{in})\ups^B_\be(t_{in})\hat{a}_\be(in) +\ups_\al^A(t_{out})\tilde{G}_{AB}(t_{out},t_{in})\bar{\ups}^B_\be(t_{in})\hat{a}_\be(in),
\end{split}
\end{equation}
where $\tilde{G}_{AB}=J_{AC}\tilde{G}^C_{\ B}$, i.e., we have a linear canonical transformation of the form \eqref{canon_trans}.

In the regularization removal limit $\La\rightarrow+\infty$, the commutator Green function can be found directly from the Heisenberg equations \eqref{Heis_eqs}. Equations \eqref{Heis_eqs} allow for the regularization removal limit and in this limit give the Klein-Gordon equation for the $\hat{\phi}$ component of $\hat{Z}^A$:
\begin{equation}\label{KleinGord_eqn}
    (\nabla^2+m^2)\hat{\phi}=0.
\end{equation}
Let the spacetime be globally hyperbolic. Then equation \eqref{KleinGord_eqn} possesses the unique retarded Green function satisfying the equation
\begin{equation}\label{retGrenfun_def}
    (\nabla^2_x+m^2)G^-(x,y)=-\frac{\de(x-y)}{\sqrt{|g(x)|}},\qquad G^-(x,y)=0\;\text{for}\; x^0<y^0.
\end{equation}
Integrating two times by parts the left hand side of the identity
\begin{equation}
    \int_\Omega dy\sqrt{|g(y)|}G^-(x,y)(\nabla^2+m^2)\hat{\phi}=0,
\end{equation}
and using \eqref{retGrenfun_def}, we arrive at
\begin{equation}
    \hat{\phi}(x)=\int_\Si d\Si^\mu\big[G^-(x,y)\partial_\mu\hat{\phi}(y)-\partial^y_\mu G^-(x,y)\hat{\phi}(y)\big],
\end{equation}
where $\Si=\partial\Omega$ is the boundary of the region $\Omega$ and $d\Si^\mu$ is the area element of the surface. Setting $\Omega$ be a cylinder with the bases $t=t_{in}$, $t=t_{out}$ and taking into account the property \eqref{retGrenfun_def} of the retarded Green function, we have
\begin{equation}\label{Cauchy_sol}
    \hat{\phi}(x)=\int d\spy\big[-\sqrt{|g(y)|}g^{0\mu}(y)\partial_\mu^yG^-(x,y)\hat{\phi}(y)+ G^-(x,y)\hat{\pi}(y)\big],\quad y^0=t_{in}.
\end{equation}
As a result, we obtain
\begin{equation}
    \tilde{G}_{AB}(t,t')=
    \left[
      \begin{array}{cc}
        \sqrt{|g(x)|}g^{0\mu}(x)\sqrt{|g(y)|}g^{0\nu}(y)\partial^{xy}_{\mu\nu} & -\sqrt{|g(x)|}g^{0\mu}(x)\partial_\mu^x \\
        -\sqrt{|g(y)|}g^{0\mu}(y)\partial_\mu^y & 1 \\
      \end{array}
    \right]iG^-(x,y),\qquad x^0=t>y^0=t',
\end{equation}
for the commutator Green function.

Now we can formulate the statement

\begin{thm}
  The imaginary part of the one-loop effective action induced by a scalar field on a globally hyperbolic spacetime allows for the regularization removal limit and, in this limit, does not depend on the choice of the smooth splitting of the spacetime onto the space and time (i.e., on the choice of the time variable) for $t\in(t_{in},t_{out})$ such that the corresponding Hamiltonian generates the unitary evolution. The imaginary part of the one-loop effective action is determined solely by the mass $m$, the spacetime metric, and by the choice of the time variable at the instants of time $t=t_{in}$ and $t=t_{out}$.
\end{thm}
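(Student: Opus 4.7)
The natural starting point is the closed formula \eqref{inout_eff_act} for $i\Ga^{(1)}_{in-out}$ at fixed cutoff $\La$. Because the eigenfrequencies $\omega_\al(\tau)$ are real, the zero-point integral $-\tfrac{i}{2}\int_{t_{in}}^{t_{out}}d\tau\sum_\al^\La\omega_\al(\tau)$ is purely imaginary and therefore contributes only to $\re\Ga^{(1)}$. Consequently the whole imaginary part is carried by the determinant term,
\begin{equation*}
    \im\Ga^{(1)}_{in-out}=\tfrac12\ln|\det[\bar\Phi(t_{out})V(t_{out})]|.
\end{equation*}
My plan is to show that this modulus reduces to a trace-log expression in the ``$\be$-type'' Bogoliubov block of the canonical transformation \eqref{canon_trans_expl}, to exhibit that block as a geometric object, and finally to identify the hypothesis of unitary evolution with the Shale--Berezin condition that guarantees the $\La\to+\infty$ limit.

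First, I would tie the operator product $\bar\Phi(t_{out})V(t_{out})$ from App.~\ref{Second_Quant} to the explicit Bogoliubov matrix \eqref{canon_trans_expl}: the $\Phi$, $V$ factorization in \eqref{unit_canon}, \eqref{V_oper}, \eqref{CA_energy} is nothing but the Iwasawa-type decomposition of the symplectic transformation $\mathcal{A}(t_{out},t_{in})=\bigl(\begin{smallmatrix}F&G\\\bar G&\bar F\end{smallmatrix}\bigr)$ relating $\hat a(out)$ to $\hat a(in)$. Using the symplectic identity $FF^\dag-GG^\dag=1$, I would rewrite
\begin{equation*}
    \im\Ga^{(1)}_{in-out}=\tfrac14\Sp\ln(1+G^\dag G),
\end{equation*}
so that the sole ingredient left is $G_{\al\be}=\bar\ups^A_\al(t_{out})\tilde G_{AB}(t_{out},t_{in})\bar\ups^B_\be(t_{in})$ read off from the second line of \eqref{canon_trans_expl}.

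Second, I would invoke geometric invariance. The commutator Green function $\tilde G^A{}_B(t_{out},t_{in})$ is assembled in \eqref{Cauchy_sol} from the unique retarded Green function $G^-(x,y)$ of $(\nabla^2+m^2)$ on the globally hyperbolic background. That $G^-$ depends only on $m$ and on the spacetime metric $g_{\mu\nu}$ is a standard consequence of global hyperbolicity and the defining conditions \eqref{retGrenfun_def}; it references no time foliation in the interior. The mode functions $\ups_\al(t_{in})$, $\ups_\al(t_{out})$ entering $G$ are solutions of the eigenvalue problem \eqref{eigen_prblm} restricted to the two Cauchy surfaces, and thus depend only on the time-choice at $t=t_{in}$ and $t=t_{out}$. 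The coordinate-dependent $g^{0\mu}$ factors appearing in $\tilde G_{AB}$ are evaluated only at points on these two surfaces, where the splitting is held fixed. Hence $G$, and with it $\im\Ga^{(1)}_{in-out}$, depends exclusively on $m$, $g_{\mu\nu}$, and the boundary time-splittings.

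Third, I would handle the regularization removal. The series $\tfrac14\Sp\ln(1+G^\dag G)$ converges as $\La\to+\infty$ precisely when $G$ is Hilbert--Schmidt, i.e.\ $\Sp(G^\dag G)<\infty$. By the criterion in theorem \ref{canon_unit_thm}, this is equivalent to the unitary implementability of the Bogoliubov transformation produced by the evolution between $t_{in}$ and $t_{out}$ --- which is precisely the hypothesis that the Hamiltonian generates a unitary evolution on the Fock space. So existence of the limit follows from the assumptions, and the foliation-independent form of the limit follows from the previous paragraph. The main obstacle I anticipate is the second step: carefully matching the appendix-level objects $\Phi$, $V$, $C$, $A$ to the Bogoliubov blocks in \eqref{canon_trans_expl} and justifying the reduction $|\det[\bar\Phi V]|=\det(1+G^\dag G)^{1/2}$ without ambiguity in the factors coming from the normal-ordering phases and from the $U(1)$ freedom $\ups_\al\to e^{i\vf_\al}\ups_\al$ noted at the end of Sect.~\ref{Hamilt_Diag}; but those phases cancel inside the modulus, so the algebraic identification should go through cleanly.
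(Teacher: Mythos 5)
Your proposal is correct and follows essentially the same route as the paper's proof: you reduce $\im\Ga^{(1)}_{in-out}$ to the modulus of the vacuum-vacuum amplitude, which by theorem \ref{canon_unit_thm} is fixed by the Bogoliubov blocks of \eqref{canon_trans_expl}; you express those blocks through the mode functions on the two boundary surfaces and the retarded Green function $G^-$, which by global hyperbolicity knows only $m$ and $g_{\mu\nu}$ and hence carries no dependence on the interior foliation; and you invoke the Hilbert--Schmidt/unitary-implementability criterion for the regularization removal limit, exactly as the paper does via theorems \ref{canon_unit_thm} and \ref{evol_symb_thm}. Your explicit identity $\im\Ga^{(1)}_{in-out}=\tfrac14\Sp\ln(1+G^\dag G)$ merely makes quantitative what the paper states by citation, and the phase ambiguities you worry about indeed cancel inside the modulus, as the paper's gauge-invariance remark \eqref{gauge_transf} confirms.
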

\begin{proof}
  The imaginary part of the one-loop effective action has the form
  \begin{equation}
    \im \Ga^{(1)}_{in-out}=-\ln|\lan out,t_{out}|\hat{U}_{t_{out},t_{in}}|in,t_{in}\ran|=-\ln|\lan \overline{out}|\overline{in}\ran|.
  \end{equation}
  According to the theorem \ref{canon_unit_thm}, the modulus of the vacuum-vacuum transition amplitude is uniquely determined by the operators $\Phi$ and $\Psi$ relating the sets of the creation-annihilation operators. In the case at hand, these operators are given in \eqref{canon_trans_expl} and expressed, in the regularization removal limit, in terms of the mode functions, diagonalizing the Hamiltonian at the instants of time $t=t_{in}$ and $t=t_{out}$, and the retarded Green function of the Klein-Gordon operator.

  As seen from \eqref{Cauchy_sol}, the retarded Green function is the solution to the Cauchy problem with the initial data $\phi(t_{in},\spx)=0$, $\pi(t_{in},\spx)=\de(\spx-\spy)$. Then it is clear that $G^-(x,y)$ is invariant under the metric diffeomorphisms\footnote{Notice that we discriminate between diffeomorphisms (the active general coordinate transformations) and changes of coordinates (the passive general coordinate transformations). See Conclusion for details.} leaving intact the initial and final Cauchy surfaces and the spatial infinity (the surface $\Si$). Indeed, consider the two Klein-Gordon equations with metrics related by the diffeomorphism mentioned. Then we can make a change of variables in the second equation such that this equation looks as the first one in the region $\Omega$. The Cauchy data for the retarded Green function also coincide in this system of coordinates. Consequently, in such a system of coordinates, the retarded Green functions for the both Klein-Gordon equations coincide in the region $\Omega$. Since the change of variables used is identical in the neighbourhood of $\Si$ then $G^-(x,y)$ coincides for the both equations on $\Si$ in the initial system of coordinates.

  Thus the retarded Green function is independent from the smooth splitting onto the space and time. The requirements of the theorem \ref{evol_symb_thm} provide the sufficient conditions for existence of the unitary evolution for $t\in[t_{in},t_{out}]$.
\end{proof}

In particular, if the metric $g_{\mu\nu}$ entering the Hamiltonian \eqref{Hamilt_scal} is obtained from the Minkowski metric $\eta_{\mu\nu}$ with the aid of the diffeomorphism identical at $t\geq t_{out}$, $t\leq t_{in}$, and at the spatial infinity, and the vector field $\partial_t$ is time-like then the imaginary part of the one-loop effective action for QFT with the Hamiltonian \eqref{Hamilt_scal} vanishes in the regularization removal limit. The non-trivial dependence on mass that we are interested in can be contained only in the real part of the one-loop effective action.

\section{Change of the coordinates}\label{Chang_Coord}

Now we restrict our considerations to the case of a two-dimensional Minkowski spacetime. Let us construct the smooth change of coordinates in the whole Minkowski spacetime such that
\begin{enumerate}
  \item Equation \eqref{KG_stat} in this system of coordinates reduces to the one-dimensional Schr\"{o}dinger equation;
  \item The change of coordinates does not depend on $m^2$ and becomes identical at the spatial infinity and for $t\geq t_{out}$ and $t\leq t_{in}$;
  \item The metric in this system of coordinates weakly depends on time (the adiabaticity).
\end{enumerate}
The independence of the change of coordinates from $m^2$ is necessary for the metric components to be independent from $m^2$ in the new system of coordinates. Only in this case are the arguments, given in the Introduction, on impossibility to cancel the essentially singular in $m^{-2}$ contributions to the one-loop effective action by the ghosts and gravitons contributions applicable. The condition 3 is needed to neglect the contribution of the first term in \eqref{inout_eff_act} to the effective action. It will be rigorously formulated below.

Let
\begin{equation}\label{metrics}
    ds^2=dt'^2-dx'^2=g_{00}[dt^2+2fdtdx-(1-f^2)dx^2],\qquad g=\det g_{\mu\nu}=-g_{00}^2,
\end{equation}
where $g_{00}=g_{00}(t,x)$ and $f=f(t,x)$. Inasmuch as the change of coordinates considered is well defined in the whole Minkowski spacetime and it does not change, by definition, the vector field $\partial_t$ entering the Legendre transform \eqref{Legandre_trans}, i.e., we construct the Hamiltonian ``with respect to'' the vector field $\partial_t$ in the new system of coordinates rather than ``with respect to'' the time-like vector field originating from $\partial_{t'}$ as a result of the change of coordinates, we can regard this change of coordinates as the metric diffeomorphism. The inverse metric is
\begin{equation}
    g^{\mu\nu}=g_{00}^{-1}\left[
                            \begin{array}{cc}
                              1-f^2 & f \\
                              f & -1 \\
                            \end{array}
                          \right].
\end{equation}
Equations \eqref{KG_stat} can be cast into the form
\begin{equation}\label{KG_stat_2}
\begin{gathered}
    \big[p_x^2+m^2g_{00}(t,x)\big]\tilde{u}_\al(t,x)=\omega^2_\al(t)\tilde{u}_\al(t,x),\qquad \tilde{w}_\al=f\partial_x\tilde{u}_\al-i\omega_\al\tilde{u}_\al,\\
    u_\al(t,x)=:e^{iS_\al(t,x)}\tilde{u}_\al(t,x),
    \qquad w_\al(t,x)=:e^{iS_\al(t,x)}\tilde{w}_\al(t,x),\qquad S_\al(t,x):=-\omega_\al(t)\int_0^xdyf(t,y).
\end{gathered}
\end{equation}
Notice that the spectrum $\omega_\al$ does not depend on the function $f$. In order to fix uniquely the phases of the mode functions $u_\al(t,x)$, which is necessary to construct the unique operator $\hat{W}_{t,t_{in}}$ defined in Sect. \ref{Hamilt_Diag}, we demand that $\tilde{u}_\al(t,x)$ is real.

It is convenient to introduce the notation
\begin{equation}
    g_{00}(t,x)=:e^{2\psi(t,x)}>0.
\end{equation}
The condition $g_{00}>0$ guaranties that the metric \eqref{metrics} is physical \cite{LandLifshCTF.1}. We also assume that
\begin{equation}
    t_{out}=-t_{in}=:T/2.
\end{equation}
The function $\psi(t,x)$ obeys the conditions
\begin{equation}\label{bound_cond_psi}
\begin{gathered}
    \psi(t,x)\underset{|t|\geq T/2}{=}0,\qquad \psi(t,x)\underset{|x|\rightarrow\infty}{\rightarrow}0,\qquad \psi'(t,x)\underset{|x|\rightarrow\infty}{\rightarrow}0,\\
    \Big|\int_{-\infty}^\infty dx\dot{\psi}(t,x)e^{\psi(t,x)}\Big|<\infty,
\end{gathered}
\end{equation}
where the prime at the function $\psi$ means the derivative with respect to $x$ and the dot denotes the derivative with respect to $t$.

Suppose that the function $\psi(t,x)$ is a given slowly varying function of $t$ with the characteristic time of variation $\tau_a<T$ (the adiabaticity), i.e., $\psi(t,x)=q(t/\tau_a,x)$, where the characteristic scale of variation of the function $q(\tau,x)$ with respect to the dimensionless variable $\tau$ and the magnitude of the function $q(\tau,x)$ are of the order of unity. Such a representation of $\psi(t,x)$ allows one simply to deduce the quantity order with respect to $\tau_a$ provided this quantity is expressed in terms of $\psi(t,x)$. If one chooses
\begin{equation}\label{f_expl}
    f(t,x)=\int_{-T/2}^t d\tau \psi'(\tau,x)e^{\psi(\tau,x)-\psi(t,x)},
\end{equation}
then the metric $g_{\mu\nu}$ is obtained from the Minkowski metric by making the change of coordinates
\begin{equation}\label{ch_coord}
    t'=-T/2+\int_{-T/2}^td\tau e^{\psi(\tau,x)},\qquad x'=\int_0^x dye^{\psi(t,y)},
\end{equation}
up to the terms of the order $O(\tau_a^{-1})$ (see below). It is clear that $f(t,x)\rightarrow 0$ for $|x|\rightarrow\infty$. In view of \eqref{bound_cond_psi}, \eqref{f_expl}, the requirement $f(T/2,x)=0$ leads to the restriction
\begin{equation}
    \Big[\int_{-T/2}^{T/2}d\tau e^{\psi(\tau,x)}\Big]'=0\;\;\Rightarrow\;\;\int_{-T/2}^{T/2}d \tau e^{\psi(\tau,x)}=T.
\end{equation}
This restriction is satisfied by the class of smooth functions
\begin{equation}
\begin{gathered}\label{phi_class}
    e^{\psi(t,x)}=:1+\vf(t,x),\qquad \vf(t,x)=-\vf(-t,x),\\
    |\vf(t,x)|<1,\qquad \vf(t,x)\underset{|t|\geq T/2}{=}0,\qquad \vf(t,x)\underset{|x|\rightarrow\infty}{\rightarrow}0,\qquad \vf'(t,x)\underset{|x|\rightarrow\infty}{\rightarrow}0,\\
    \Big|\int_{-\infty}^\infty dx\dot{\vf}(t,x)\Big|<\infty,
\end{gathered}
\end{equation}
where the conditions \eqref{bound_cond_psi} have been taken into account. In general, one can replace the condition of oddness of $\vf(t,x)$ with respect to $t$ by $\vf(t,x)=\dot{\Omega}(t,x)$, where $\Omega(t,x)$ is a smooth function vanishing for $|t|>T/2$ and such that the other conditions in \eqref{phi_class} are satisfied, but the class \eqref{phi_class} will be sufficient for our purpose. The metric components $g_{\mu\nu}$ are written in terms of the function $\vf(t,x)$ as
\begin{equation}\label{metr_comps}
    g_{00}(t,x)=[1+\vf(t,x)]^2,\qquad f(t,x)=\int_{-T/2}^t\frac{d\tau\vf'(\tau,x)}{1+\vf(t,x)}.
\end{equation}
Therefore, the characteristic time of variation of the eigenvalues $\omega_\al(t)$ of the quantum-field Hamiltonian \eqref{Hamilt_reg} is $\tau_a$. In a general position, the function $f(t,x)$ determined by the expression \eqref{metr_comps} is large and proportional to $\tau_a$ since $\vf'(t,x)$ is slowly varying on the time scales less than $\tau_a$.

The complete expression of the metric resulting from the change of coordinates \eqref{ch_coord} has the form
\begin{equation}\label{metrics_exct}
    ds^2=g_{\mu\nu}(t,x)dx^\mu dx^\nu-g_{00}(t,x)\big[h^2(t,x)dt^2+2h(t,x)dtdx \big],\qquad h(t,x):=\int_{0}^x\frac{dy\dot{\vf}(t,y)}{1+\vf(t,x)},
\end{equation}
where the components of the metric $g_{\mu\nu}$ are given in formulas \eqref{metrics}, \eqref{metr_comps}. If $b^{-1}$ is a characteristic scale of variation of $\vf(t,x)$ with respect to $x$ then
\begin{equation}\label{h_estim}
    h(t,x)=O\big((b\tau_a)^{-1}\big),
\end{equation}
uniformly with respect to $t$ and $x$. The metric \eqref{metrics_exct} goes to the Minkowski metric at $|x|\rightarrow\infty$ only if
\begin{equation}\label{phi_class_extr}
    \int_0^{+\infty}dx\vf(t,x)=\int_{-\infty}^0dx\vf(t,x)=0.
\end{equation}
However, we shall not impose the additional condition \eqref{phi_class_extr} on $\vf(t,x)$ since, in this case, it is rather hard to find a class exactly solvable potentials of the Schr\"{o}dinger equation \eqref{KG_stat_2} that satisfy \eqref{phi_class}, \eqref{phi_class_extr} even approximately. The difference of the metric $g_{\mu\nu}$ from the Minkowski metric at the spatial infinity is of the order of $(b\tau_a)^{-1}$ and it can be made arbitrarily small at $\tau_a\rightarrow+\infty$. Notice that if $\vf(t,x)$ tends exponentially to zero at $|x|\rightarrow\infty$ then $h'(t,x)$ and all the higher derivatives of $h(t,x)$ with respect to $x$ also tend exponentially to zero at $|x|\rightarrow\infty$.

It is clear that the metric always enters the effective action and scalars constructed with the aid of this metric in the combination \eqref{metrics_exct}. Therefore, we can neglect the second contribution to the metric \eqref{metrics_exct}, which is proportional to $h$, in comparison with the first one for $b\tau_a\gg1$. The change of the eigenfunctions and the eigenvalues of the equation \eqref{KG_stat} due to the contributions proportional to $h$ is also negligibly small at large $\tau_a$. Henceforth, we disregard these terms in the metric \eqref{metrics_exct} and consider only the metric $g_{\mu\nu}$.

Let us now estimate the contribution of the first term to the one-loop effective action \eqref{inout_eff_act}. We give here the two arguments demonstrating that this contribution can be neglected at sufficiently large $\tau_a$. The first argument appeals to the general theory of quantum transitions under the influence of adiabatic perturbations, which is given in \cite{LandLifshQM.11}, Sect. 53. According to this theory, the probability of transition from the state $1$ to the state $2$ is determined, in the leading order, by the nearest to the real axis singular point of the function $\omega_{21}(t)=\omega_2(t)-\omega_1(t)$ as a function of the complex variable $t$. In our case, $\omega_1(t)$ is the energy of the instantaneous vacuum state (the last term in \eqref{Hamilt_reg}) and $\omega_2(t)$ is the energy of the first excited state (the minimal energy of the one-particle state). The functions $\omega_{1,2}(t)$ possess a singularity of the form $\sqrt{t-t_0}$ at the point $t_0\in \mathbb{C}$ where the terms cross: $\omega_{1}(t_0)=\omega_2(t_0)$ (for details see, e.g., \cite{JoyeThes, HagJoye}). Since the functions $\omega_{1,2}(t)$ do not have singularities on the real axis and possess a characteristic scale of variation of the order $\tau_a$ then $\im t_0\gtrsim\tau_a$. Consequently, we have the following rough estimate for the transition probability from the instantaneous vacuum state to the first excited state over the period of time $\tau_a$:
\begin{equation}\label{Landau-Zener_est}
    w_{21}\sim\exp\Big(-2m\int_0^{t_0}dt\sqrt{1-t/t_0}\Big)\sim\exp(-4m\tau_a/3).
\end{equation}
The quantum transitions from the vacuum to the states with higher energies are even more suppressed. Therefore, taking sufficiently large $\tau_a$, we can make this probability to be arbitrarily small. The transitions from the instantaneous vacuum state are the result of the action of the two last terms in \eqref{Smatr_energy_1}, i.e., they originate from the action of the operators constructed using the quadratic form $A_{\al\be}$ in the notation of \eqref{Hamil_gener}. Thus, at sufficiently large $\tau_a$, we can neglect the contributions of these operators to the quantum-field Hamiltonian. Then the contribution of the first term to the one-loop effective action \eqref{inout_eff_act} vanishes (see the remark after \eqref{vac_vac_expl}). As a rule, the estimations of the form \eqref{Landau-Zener_est} are valid in the adiabatic limit for the Hamiltonians analytic in $t$ near the real axis. The sufficient conditions that, in the adiabatic limit, formula \eqref{Landau-Zener_est} holds can be found in \cite{JoyeThes, JoyeGener, Nenciu, ElgHag}.

The second argument is based on the direct evaluation of the matrix elements of the operator $\bar{A}_{\al\be}$. Indeed, the skew-scalar product defining $\bar{A}_{\al\be}$ is written as
\begin{equation}
    \{\dot{\ups}_\al(t),\ups_\be(t)\}=\int d x
    \left[
      \begin{array}{cc}
        \dot{u}_\al & \dot{w}_\al \\
      \end{array}
    \right]
    \left[
      \begin{array}{cc}
        0 & -1 \\
        1 & 0 \\
      \end{array}
    \right]
    \left[
      \begin{array}{c}
        u_\be \\
        w_\be \\
      \end{array}
    \right]=
    \int dx(\dot{w}_\al u_\be-\dot{u}_\al w_\be).
\end{equation}
This expression should be also symmetrized over $\al$, $\be$ (see \eqref{Smatr_energy_1}). It follows from \eqref{KG_stat_2} that
\begin{equation}\label{u_w_approx}
\begin{split}
    \dot{u}_\al&=e^{iS_\al}\Bigl[\dot{\tilde{u}}_\al-i\Bigl(\dot{\omega}_\al\int_0^xdyf+\omega_\al\int_0^xdy\dot{f}\Bigr)\tilde{u}_\al\Bigr],\\
    \dot{w}_\al&= e^{iS_\al}\Big[\dot{f}\partial_x\tilde{u}_\al+f\partial_x\dot{\tilde{u}}_\al-i\dot{\omega}_\al\tilde{u}_\al-i\omega_\al\dot{\tilde{u}}_\al -i\Bigl(\dot{\omega}_\al\int_0^xdyf+\omega_\al\int_0^xdy\dot{f}\Bigr)(f\partial_x\tilde{u}_\al-i\omega_\al\tilde{u}_\al)\Big].
\end{split}
\end{equation}
Then
\begin{equation}\label{A_saddle_point}
    \{\dot{\ups}_{\al}(t),\ups_{\be}(t)\}=\int dx e^{i(S_\al+S_\be)}[\cdots ].
\end{equation}
The expression standing in the exponent in the integrand is large since it is proportional to $m\tau_a$. Therefore, the integral over $x$ can be estimated with the help of the stationary phase method. The stationary points are located at zeros of the function $f(t,x)$ as a function of the variable $x$. We suppose that $f(t,x)$ possesses a single simple root at the point  $x=0$ for any $t$, viz.,
\begin{equation}\label{extrem_restr}
    \vf'(t,0)=0,\qquad\vf''(t,0)\neq0,\quad\forall t\in \mathbb{R}.
\end{equation}
It is clear that
\begin{equation}
    \tilde{u}_\al=O(1),\qquad \partial_x\tilde{u}_\al=O(1),\qquad \omega_\al=O(1),
\end{equation}
for $\tau_a\rightarrow+\infty$. The expression in the square brackets in \eqref{A_saddle_point} contains either the terms $O(\tau_a^{-1})$ or the terms $O(1)$ vanishing at $x=0$. Applying to \eqref{A_saddle_point} the standard formula for the contribution of the saddle point, we arrive at
\begin{equation}\label{skew_approx}
    \{\dot{\ups}_{\al}(t),\ups_{\be}(t)\}=O(\tau_a^{-3/2}).
\end{equation}
Loosely speaking, the integral \eqref{A_saddle_point} is small at large $\tau_a$ since, in this case, the integrand is a rapidly oscillating function of $x$. Notice that if the expression standing in the exponent in \eqref{A_saddle_point} had not have the stationary point then, according to the Riemann-Lebesgue lemma, the integral \eqref{A_saddle_point} would be decreasing faster than any power of $\tau_a$ at $\tau_a\rightarrow+\infty$.

Now we see that even if we take into account that the one-loop correction contains the operator $A_{\al\be}$ integrated over time (see the first correction \eqref{1_corr_nonstat}) that may result in appearance of the additional factor $\tau_a$, we can neglect the contributions of the operator $A_{\al\be}$ to the one-loop effective action in the limit $\tau_a\rightarrow+\infty$. In other words, the first term in the one-loop effective action \eqref{inout_eff_act} vanishes in this limit, and we have to investigate only the contribution of the second term in \eqref{inout_eff_act}.

\section{Schr\"{o}dinger equation}\label{Schr_Equation}

The equation for the mode functions \eqref{KG_stat_2} is useful to rewrite as the one-dimensional Schr\"{o}dinger equation,
\begin{equation}\label{Schrod_eqn}
    \big[-\partial_x^2+m^2((1+\vf)^2-1)\big]\tilde{u}_\al=(\omega^2_\al-m^2)\tilde{u}_\al=:\lambda\tilde{u}_\al,
\end{equation}
with the potential vanishing at the spatial infinity. We assume that the function $\vf(t,x)$ is infinitely differentiable and rapidly tends to zero at the spatial infinity with all its derivatives. In particular, the condition \eqref{potential_cond} is fulfilled. Of course, the requirements \eqref{phi_class}, \eqref{extrem_restr} are also supposed to be satisfied.

The contribution of the second term to \eqref{inout_eff_act} can be cast into the form (in many respects, we follow here the procedure presented in \cite{Bordag_scat,Khusnutd})
\begin{equation}\label{one_loop_st}
    \Ga^{(1)}_{st}\equiv-\int_{-T/2}^{T/2} dt E_0(t)=-\frac12\int_{-T/2}^{T/2} dt\int_{-m^2}^{\La^2-m^2} d\la \rho(\la)\sqrt{\la+m^2},
\end{equation}
where $\rho(\la)$ is the spectral density \eqref{spectr_denst} of the Schr\"{o}dinger equation operator \eqref{Schrod_eqn}. It is assumed that for $\la>0$ the mode functions are subject to the zero boundary conditions at $|x|=L/2$, where $L$ is much bigger than the characteristic scale of variation of $\vf(t,x)$ as a function of $x$. Let us split the spectral density onto two parts
\begin{equation}
    \rho(\la)=\rho_0(\la)+\De\rho(\la),
\end{equation}
where $\rho_0(\la)$ is the spectral density corresponding to $\vf=0$ and $\De\rho(\la)$ is given by formula \eqref{spectr_denst_dif}. The contribution of $\rho_0(\la)$ to $E_0(t)$ takes the form of the non-renormalized Coleman-Weinberg potential \cite{ColWein.9},
\begin{multline}
    \frac12\int_{-m^2}^{\La^2-m^2} d\la \rho_0(\la)\sqrt{\la+m^2}=\\
    =\frac{1}{2}\int_{-L/2}^{L/2}dx\int_{-\infty}^{\infty}\frac{dp}{2\pi}\sqrt{m^2+p^2}\theta(\La-\sqrt{m^2+p^2})
    \underset{\La\rightarrow+\infty}{\rightarrow} \int_{-L/2}^{L/2} dx\Big[\frac{\La^2}{4\pi}+\frac{m^2}{8\pi}\ln\frac{4\La^2}{m^2e}\Big],
\end{multline}
for the spacetime with the metric $\eta_{\mu\nu}$.

In virtue of the restrictions \eqref{phi_class} on the function $\vf(t,x)$, there is such $\la_0>-m^2$ that all the eigenvalues $\la>\la_0$. Then the contribution of $\De\rho(\la)$ to the instantaneous energy of zero-point fluctuations can be written as
\begin{equation}\label{energy_zero_p}
    \De E_0(t)=\frac12\int_{-m^2}^{\La^2-m^2} d\la \De\rho(\la)\sqrt{\la+m^2}=\int_{\la_0}^{\La^2-m^2} \frac{d\la}{4\pi i}\partial_\la\ln\frac{t(\sqrt{\la_+})}{t(\sqrt{\la_-})} \sqrt{\la+m^2},
\end{equation}
where the branch of the square root standing in the transition amplitude is taken with the cut along the real positive semiaxis, while for $\sqrt{\la+m^2}$ the principal branch is chosen. The integral in \eqref{energy_zero_p} is useful to represent as the contour integral
\begin{equation}
    \De E_0(t)=\frac{i}{4\pi}\int_{C_\La}d\la\partial_\la\ln t(\sqrt{\la})\sqrt{\la+m^2},
\end{equation}
where $C_\La$ starts at the point $(\La^2-m^2+i0)$, encircles counter-clockwise the point $\la_0$, and ends at the point $(\La^2-m^2-i0)$. For large $\la$, the transition amplitude has the asymptotics (see [(25), \cite{FaddZakh}])
\begin{equation}
    \ln t(\sqrt{\la})\underset{\la\rightarrow\infty}{\rightarrow}-\frac{c_1}{\sqrt{\la}}+O(\la^{-3/2}),\qquad c_1=\frac{im^2}{2}\int_{-\infty}^\infty dx(g_{00}-1)=\frac{im^2}{2}\int_{-\infty}^\infty dx(\sqrt{|g|}-1).
\end{equation}
Then
\begin{equation}
    \De E_0(t)=\frac{i}{4\pi}\int_{C_\La}d\la\Big(\partial_\la\ln t(\sqrt{\la})-\frac{c_1}{2\la^{3/2}}\Big)\sqrt{\la+m^2}+\frac{ic_1}{8\pi}\int_{C_\La}d\la\la^{-3/2}\sqrt{\la+m^2}.
\end{equation}
The first integral allows for the regularization removal limit $\La\rightarrow+\infty$. The second integral is reduced to
\begin{multline}
    \int_{C_\La}d\la\la^{-3/2}\sqrt{\la+m^2}=\int_{C_\La}d\la\la^{-3/2}(\sqrt{\la+m^2}-m)+m\int_{C_\La}d\la\la^{-3/2}\approx\\
    \approx-2\int_0^{\La^2-m^2}d\la\la^{-3/2}(\sqrt{\la+m^2}-m)\approx -2\ln\frac{4\La^2}{m^2e^2},
\end{multline}
where in the approximate equalities, the terms vanishing in the regularization removal limit are discarded. As a result, neglecting the terms vanishing at $L\rightarrow+\infty$, we have
\begin{equation}\label{energy_zero1}
    E_0(t)=\int_{-L/2}^{L/2}dx\Big\{\sqrt{|g|}\Big[\frac{\La^2}{8\pi g_{00}}+\frac{m^2}{8\pi}\ln\frac{4\La^2}{m^2e^2}\Big]+\frac{m^2}{8\pi}\Big\} +\frac{i}{4\pi}\int_{C_\infty}d\la\Big(\partial_\la\ln t(\sqrt{\la})-\frac{c_1}{2\la^{3/2}}\Big)\sqrt{\la+m^2}.
\end{equation}
The structure of divergencies coincides with the general expression found in \cite{KalKaz.11,gmse.11}. The conformal anomaly (the coefficient at $\ln m^2$) also coincides with the well-known answer for the flat spacetime. The last term in the curly brackets is not written in the explicitly covariant form in terms of $g_{\mu\nu}$ and the vector field $\xi^\mu=(1,0)$. The contributions of such a type standing at $m^2$ in the large mass expansion are also contained in the last term in \eqref{energy_zero1}. Only in sum with these contributions does the coefficient at $m^2$ take the covariant form (for more details, see \cite{KalKaz.11} for the case of four-dimensional spacetime).

It is useful to rewrite the last integral in \eqref{energy_zero1} as the integral over the cut of the function $\sqrt{\la+m^2}$. The expression in the round brackets in \eqref{energy_zero1} is holomorphic in the complex $\la$ plane with the cut along $[\la_1,+\infty)$, where $\la_1$ is the lowest eigenvalue of \eqref{Schrod_eqn}, and is real at $\la\in(-\infty,\la_1)$ \cite{Faddeev} (see also \cite{ZMNP}, Sect. 1.1). At large $\la$, it decreases as $\la^{-5/2}$. Therefore, we can deform the contour $C_\infty$ such that it encompasses the principal branch cut of $\sqrt{\la+m^2}$ and reduce the integral to the integral over the cut. Then we obtain (cf. \cite{Bordag_scat,Khusnutd})
\begin{equation}\label{energy_zero_fin}
    \frac{i}{4\pi}\int_{C_\infty}d\la\Big(\partial_\la\ln t(\sqrt{\la})-\frac{c_1}{2\la^{3/2}}\Big)\sqrt{\la+m^2}=\int_m^\infty\frac{dk}{2\pi}\Big(\partial_k\ln t(ik) +\frac{ic_1}{k^2}\Big)\sqrt{k^2-m^2}.
\end{equation}
The expression in the round brackets is real on the integration interval. It is the contribution \eqref{energy_zero_fin} to the one-loop effective action that we are interested in in what follows. In fact, we need to show that there exists the potential in the class of the potentials considered such that this contribution is not expandable in the convergent Laurent series in $m^{-2}$ after the integration over $t$. Otherwise, in view of the non-renormalizability of quantum gravity, the contribution \eqref{energy_zero_fin} can be always canceled out by the local counterterms.

Let us show that the term in the square brackets in \eqref{energy_zero1} contains all the terms at $m^2\ln m^2$ and also find the contribution to the one-loop effective action standing at $m^{-2}$ in the large mass expansion. To this end, we employ the general formula [(34), \cite{KalKaz.11}] for the high-temperature expansion of the free energy of fermions. In accordance with [(9), \cite{KalKaz.11}] (see also \cite{olopqfeces.9}) and [(34), \cite{KalKaz.11}], the instantaneous energy of zero-point fluctuations takes the form
\begin{equation}\label{energy_zero_asym}
    E_0(t)=\partial_\be(\be F_f)=\zeta(0)\s^0_\epsilon+\cdots=-\frac12\s^0_\epsilon+\cdots,
\end{equation}
where $\be\rightarrow0$ is the regularization parameter, the dots denote the terms standing at the non-negative powers of $m^2$, the complex parameter $\epsilon$ must be set to zero in the final answer, and
\begin{equation}\label{sigma_nu}
    \s_\nu^0=\int_0^\infty d\omega\int_C\frac{d\tau\tau^{\nu-1}}{2\pi i}\Sp e^{\tau(-\partial_x^2+m^2g_{00}-\omega^2)}.
\end{equation}
The integration contour $C$ goes top down parallel to the imaginary axis and slightly to the left of it.

In order to obtain the asymptotic expansion in $m^{-2}$, we employ the asymptotic expansion of the trace of the heat kernel entering into \eqref{sigma_nu}:
\begin{equation}\label{HK_expan}
    \Sp e^{\tau(-\partial_x^2+m^2g_{00}-\omega^2)}=\int_{-L/2}^{L/2} dx e^{i\pi d/2}\frac{a_k(t,x)}{(4\pi)^{d/2}}\tau^{k-d/2}e^{-\tau(\omega^2-\tilde{m}^2)},
\end{equation}
where, for brevity, we have introduced the notation $\tilde{m}^2:=m^2g_{00}$ and $d=1$. The cut of $\tau^{k-d/2}$ is chosen to be along the real positive semiaxis. The resummation analogous to [(18), \cite{KalKaz.11}] has been made in \eqref{HK_expan}, so $\tilde{m}^2$ without derivatives does not appear in the expansion coefficients $a_k$. Also the terms vanishing in the limit $L\rightarrow+\infty$ have been thrown away from \eqref{HK_expan}. The coefficients $a_k$ are representable in the form
\begin{equation}\label{akj}
    a_k=\sum_{j=0}^{[4k/3]}a_k^{(j)}m^j,
\end{equation}
where the sum is carried over even $j$ only. The upper summation limit is determined in the same way as in formula [(38), \cite{KalKaz.11}] in counting the powers of $\omega$. Substituting the expansion \eqref{HK_expan} into \eqref{sigma_nu} and integrating, at first, over $\tau$ and then over $\omega$, we obtain the asymptotic expansion
\begin{equation}
    \s^0_\nu=e^{i\pi\nu}\int_{-L/2}^{L/2} dx\sum_{k=0}^\infty(-1)^k\frac{a_k}{(4\pi)^{D/2}}\Ga(k+\nu-D/2)(\tilde{m}^2)^{D/2-\nu-k},
\end{equation}
where $D:=d+1$. Writing $a_k$ in the form \eqref{akj} and collecting the terms at the same power of $m^2$, we come to
\begin{equation}
    \s^0_\nu=e^{i\pi\nu}\int_{-L/2}^{L/2} dx\sum_{s=0}^\infty\sum_{n=0}^{2s}\frac{(m^2)^{D/2-s-\nu}}{(4\pi)^{D/2}}(-1)^{s+n}\Ga(s+n+\nu-D/2)g_{00}^{D/2-\nu-n-s}a_{s+n}^{(2n)}.
\end{equation}
The explicit expressions for $a_k^{(j)}$ can be readily found from the general formulas given in Appendix B of \cite{KalKaz.11}. At that, it should be taken into account that in our case
\begin{equation}
    E=\omega^2-m^2g_{00},\qquad V=0,\qquad \Omega_{\mu\nu}=0,\qquad \tilde{R}^\mu_{\ \nu\rho\s}=0.
\end{equation}

It follows from formula [(34), \cite{KalKaz.11}] that the terms containing $\ln m^2$ stem only from the expansion of $\s_\nu^0$ in $\nu$ in the neighbourhood of $\nu=0$. Recall that, in \cite{KalKaz.11}, the parameter $\nu$ enters into the definition of $d$ and the common factor $e^{i\pi\nu}$ for the both terms in [(34), \cite{KalKaz.11}] has been already put to unity. In our case, the terms containing $\ln m^2$ arise if
\begin{equation}
    D/2-s-n=0,1,2,\ldots,
\end{equation}
i.e., for $s=0$, $n=0$ and $s=1$, $n=0$. Taking into account that $a_0=1$ and $a_1=0$, and discarding the factor $e^{i\pi\nu}$, we have
\begin{equation}
    \s^0_\nu=\int_{-L/2}^{L/2} dx\Big[-\frac{\tilde{m}^2}{4\pi\nu}+\frac{\tilde{m}^2}{4\pi}\ln(\tilde{m}^2e^{\ga-1})+O(\nu)\Big],
\end{equation}
where $\ga$ is the Euler constant. Substituting this expansion into \eqref{energy_zero_asym}, we see that the coefficient at $m^2\ln m^2$ coincides exactly with that standing in formula \eqref{energy_zero1}. Other contributions containing $\ln m^2$ are absent in the asymptotic expansion in $m^{-2}$.

As for the term at $m^{-2}$ in $\s^0_0$, we obtain
\begin{multline}
    \int_{-L/2}^{L/2} \frac{dx}{4\pi}\Big[\frac{a_2^{(0)}}{g_{00}}-\frac{a_3^{(2)}}{g_{00}^2}+2\frac{a_4^{(4)}}{g_{00}^3}-3!\frac{a_5^{(6)}}{g_{00}^4}+4!\frac{a_6^{(8)}}{g_{00}^5}\Big]= \\ =\int_{-L/2}^{L/2} \frac{dx}{4\pi}\Big[\frac1{3!}\frac{E^{(4)}}{10m^2g_{00}^2}+\frac2{4!}\frac{\frac45E'E''+\frac13(E'')^2+\frac4{15}(E'')^2}{m^4 g_{00}^3}-\frac{3!}{5!}\frac{-2(E')^2E''-\frac53(E')^2E''}{m^6g_{00}^4}+\frac{4!}{6!}\frac{5(E')^4}{2m^8g_{00}^5} \Big].
\end{multline}
Collecting the terms, this expression can be written as
\begin{equation}
    \int_{-L/2}^{L/2} \frac{dx}{240\pi}\Big[-\frac{g_{00}^{(4)}}{g_{00}^2} +\frac{4g_{00}'g_{00}'''+3(g_{00}'')^2}{g_{00}^3} -11\frac{(g_{00}')^2g_{00}''}{g_{00}^4} +5\frac{(g_{00}')^4}{g_{00}^5}\Big]=\int_{-L/2}^{L/2} \frac{dx}{240\pi}\Big[\frac{(g_{00}'')^2}{g_{00}^3}-\frac53\frac{(g_{00}')^4}{g_{00}^5}\Big],
\end{equation}
where the terms vanishing at $L\rightarrow+\infty$ have been discarded in the last equality. If one substitutes $g_{00}(t,x)$ from \eqref{metr_comps} to the resulting expression and integrates over $t$ as in \eqref{one_loop_st} then, in a general position, this contribution does not vanish even if one takes into account the conditions \eqref{phi_class}. It may appear at first sight that this result contradicts the general statement proved in \cite{KalKaz2.12}, Sect. 3, that the coefficients at the negative powers of $m^2$ in the one-loop effective action on a stationary background with the standard definition of the Hamiltonian are expressed solely in terms of the spacetime metric, its curvature, and covariant derivatives of the curvature. In other words, one may expect that, in our case, the term in the one-loop effective action at $m^{-2}$ is zero. However, in the case at issue, the statement proved in \cite{KalKaz2.12} is inapplicable since the metric \eqref{metrics} depends on $t$ and is not stationary in the system of coordinates where the Hamiltonian is constructed.

Now we give a rough argument for that, in a general position, the contribution \eqref{energy_zero_fin} is not expandable in a convergent Laurent series in $m^{-2}$. This argument is the analog of the ``Dyson argument''. Indeed, as we have shown, the asymptotic expansion \eqref{energy_zero_fin} in a large mass does not contain the terms at $\ln m^2$ and includes only the powers of $m^{-2}$, the coefficients at which being real. If this series were convergent in some ring of the complex $m^2$ plane then the expression \eqref{energy_zero_fin} should remain real under the change $m^2\rightarrow-m^2$. Changing continuously $m^2$ from $m^2$ to $-m^2$, we encounter the situation (at $m^2=0$) when the contour $C_\infty$ in the $\la$ plane is pinched by the branch points of the function $\sqrt{\la+m^2}$ and the expression standing in the round brackets in \eqref{energy_zero_fin}. As a result, the contribution \eqref{energy_zero_fin} possesses a singular point at $m^2=0$ in the $m^2$ plane, and in passing through this point, \eqref{energy_zero_fin} acquires a complex additive. The bypass rule of the singular point is dictated by the prescription $m^2\rightarrow m^2-i0$, viz., the branch point of $\sqrt{\la+m^2}$ should lie a little bit higher than the real axis. When $m^2$ is set to $-m^2$, the branch point of $\sqrt{\la+m^2}$ is located slightly higher than the cut (corresponding to the continuous spectrum of the Schr\"{o}dinger equation) of the expression enclosed by the round brackets in \eqref{energy_zero_fin}. It is clear that, in a general position, \eqref{energy_zero_fin} has a non-zero imaginary part for such a configuration since all the terms in \eqref{energy_zero_fin} are complex and there is not any symmetry or property of $\ln t(\sqrt{\la})$ that prevents \eqref{energy_zero_fin} to be complex. From the physical point of view, the existence of the imaginary contributions to the effective action at $m^2<0$ is also obvious as this situation corresponds to the presence of a tachyon in the theory, i.e., the vacuum is unstable and $\im\Ga^{(1)}_{in-out}\neq0$.

Of course, this argument is not rigorous since, having integrated over $t$ in accordance with \eqref{one_loop_st}, the imaginary contribution to the effective action may disappear due to the properties of the potential \eqref{phi_class}. In the next section, we shall consider a concrete example and make sure that this cancelation does not happen.

\section{Example}\label{Example}

Unfortunately, it seems there is not the family of exactly solvable potentials of the form \eqref{Schrod_eqn} that satisfies all the properties \eqref{phi_class}, \eqref{extrem_restr} (see the exactly solvable potentials in \cite{BagGit1}). Therefore, we shall consider a certain family of potentials that are not exactly solvable, but satisfying \eqref{phi_class}, \eqref{extrem_restr} and approximated with a good accuracy by the exactly solvable potentials. The accuracy of this approximation will be controlled by the semiclassical methods.

Let
\begin{equation}\label{exampl_phi}
    \vf(t,x)=\frac{\bar{c}(t)}{\ch^2(b(t)x)},\qquad \bar{c}(t)=-\bar{c}(-t),\quad |\bar{c}(t)|<1,\qquad \infty>b(t)=b(-t)>0,
\end{equation}
and $\bar{c}(t)=0$ for $|t|\geq T/2$. The functions $b(t)$, $\bar{c}(t)$ are supposed to be smooth. It is clear that the functions \eqref{exampl_phi} satisfy the conditions \eqref{phi_class}, \eqref{extrem_restr}, and the estimate \eqref{h_estim}. It is not difficult to check that the estimate \eqref{skew_approx} is also fulfilled. The potential of the Schr\"{o}dinger equation \eqref{Schrod_eqn} corresponding to such functions $\vf(t,x)$,
\begin{equation}
    V(t,x)=m^2\Big[\frac{2\bar{c}(t)}{\ch^2(b(t)x)}+\frac{\bar{c}^2(t)}{\ch^4(b(t)x)}\Big],
\end{equation}
is not exactly solvable. As the approximation to this potential, we consider the exactly solvable potential
\begin{equation}\label{TP_pot}
    V_0(t,x)=m^2\frac{c(t)}{\ch^2(b(t)x)},
\end{equation}
taking $c(t)$ in such a way that $V$ and $V_0$ coincide in the extremum point
\begin{equation}\label{ccbar}
    c(t)=2\bar{c}(t)+\bar{c}^2(t).
\end{equation}
Then the relative error amounts to
\begin{equation}
    \Big|\frac{V-V_0}{V_0}\Big|=\Big|\frac{\bar{c}^2}{2\bar{c}+\bar{c}^2}\Big|\tah^2(bx)\lesssim\frac{|c|}{4},\quad x\in \mathbb{R},
\end{equation}
where in the last inequality, it is assumed that $c$ is small. The plots of the potentials and the relative error for different $c$ are given in Fig. \ref{pots_fig}. It is to be expected that for small $c$ the spectral density of the Hamiltonian with the potential $V$ coincides with the spectral density of the Hamiltonian with the potential $V_0$ in the leading nontrivial order in $c$, where $c$ and $\bar{c}$ are related according to \eqref{ccbar}.

\paragraph{Semiclassical spectral density.}

In order to give a quantitative estimate for the difference between the spectral density corresponding to the potential $V$ and the spectral density corresponding to the potential $V_0$, we can use the well-known semiclassical formulas (see, e.g., \cite{BalBlochSch}, Sect. 5). According to the Bohr-Sommerfeld quantization rule, the semiclassical spectral density is expressed through the adiabatic invariants
\begin{equation}
    I(\la)=2\int_{x_1}^{x_2}dx\sqrt{\la-V(x)},\qquad I^0(\la)=2\int_{\s_1}^{\s_2}dx\sqrt{\la-V_0(x)},
\end{equation}
where the possible points $x_1$, $x_2$, $\s_1$, and $\s_2$ are shown on Fig. \ref{pots_fig}. Let us redefine, for brevity, the variables $\la\rightarrow m^2\la$ and $x\rightarrow x/b$. Then $I$ and $I^0$ contain the common factor $\bar{m}:=m/b$. We eliminate this factor from the definitions of $I$ and $I_0$. Then, in the leading order in derivatives of the potential, the semiclassical spectral density is given by the sum of terms of the form (see [(5.7), \cite{BalBlochSch}])
\begin{equation}\label{quasi_spectr_dens}
    \frac{1}{2\pi mb}\im\big[\tau(\la_+)e^{ik\bar{m}I(\la_+)}\big],
\end{equation}
and, in fact, is a manifestation of the Bohr-Sommerfeld quantization rule. In \eqref{quasi_spectr_dens}, $\tau(\la):=I'(\la)$ denotes a period and $k$ is some non-negative integer number. The semiclassical approximation is applicable in this case since $\bar{m}\gg1$.

We can write approximately the adiabatic invariant as
\begin{equation}\label{adiab_inv_appr}
    I\approx I^0+\bar{c}^2\int_{\s_1}^{\s_2}\frac{dx\sh^2x}{\sqrt{\la-V_0}\ch^4x}.
\end{equation}
Both the first and second integrals on the right-hand side of \eqref{adiab_inv_appr} can be evaluated analytically. Denoting the second term in \eqref{adiab_inv_appr} as $\de I$, we have
\begin{equation}\label{adiab_inv_appr1}
    I^0_{23}=2\pi(\sqrt{-c}-\sqrt{-\la}),\qquad\de I_{23}=\frac{\pi\bar{c}^2}{2\sqrt{-c}}\Big(1-\frac{\la}{c}\Big),
\end{equation}
where the indices at $I$ and $\de I$ denote the numbers of the turning points. For the adiabatic invariants corresponding to the infinite motion, we obtain
\begin{equation}\label{adiab_inv_appr2}
\begin{split}
    I^0_{12}&=2\sqrt{\la}\bigg[\arch\frac{\sh\frac{\bar{L}}{2}}{\sqrt{\frac{c}{\la}-1}}-\sqrt{\frac{c}{\la}}\arth\frac{\sh\frac{\bar{L}}{2}}{\sqrt{\frac{\la}{c}\ch^2\frac{\bar{L}}{2}-1}} \bigg]\approx \sqrt{\la}\Big[\bar{L}-\ln\big(\frac{c}{\la}-1\big) -\sqrt{\frac{c}{\la}}\arth\sqrt{\frac{c}{\la}} \Big],\\
    \de I_{12}&=\frac{\bar{c}^2}{2\sqrt{c}}\bigg[\big(1-\frac{\la}{c}\big)\arth\frac{\sh\frac{\bar{L}}{2}}{\sqrt{\frac{\la}{c}\ch^2\frac{\bar{L}}{2}-1}} +\frac{\tah\frac{\bar{L}}{2}}{\ch\frac{\bar{L}}{2}}\sqrt{\frac{\la}{c}\ch^2\frac{\bar{L}}{2}-1} \bigg]\approx \frac{\bar{c}^2}{2\sqrt{c}}\Big[\big(1-\frac{\la}{c}\big)\arth\sqrt{\frac{c}{\la}}+\sqrt{\frac{\la}{c}} \Big],\\
    I^0_{14}&=4\sqrt{\la}\arsh\frac{\sh\frac{\bar{L}}{2}}{\sqrt{1-\frac{c}{\la}}}+4\sqrt{-c}\arctg\frac{\sh\frac{\bar{L}}{2}}{\sqrt{1-\frac{\la}{c}\ch^2\frac{\bar{L}}{2}}}\approx 2\sqrt{\la}\Big[\bar{L}-\ln\big(1-\frac{c}{\la}\big)\Big]+4\sqrt{-c}\arctg\sqrt{-\frac{c}{\la}},\\
    \de I_{14}&=\frac{\bar{c}^2}{\sqrt{-c}}\bigg[\big(1-\frac{\la}{c}\big)\arctg\frac{\sh\frac{\bar{L}}{2}}{\sqrt{1-\frac{\la}{c}\ch^2\frac{\bar{L}}{2}}} -\frac{\tah\frac{\bar{L}}{2}}{\ch\frac{\bar{L}}{2}}\sqrt{1-\frac{\la}{c}\ch^2\frac{\bar{L}}{2}} \bigg]\approx \frac{\bar{c}^2}{\sqrt{-c}} \Big[\big(1-\frac{\la}{c}\big)\arctg\sqrt{-\frac{c}{\la}}-\sqrt{-\frac{\la}{c}} \Big],
\end{split}
\end{equation}
where $\bar{L}:=bL$, and it is assumed $\la/c\ch^2(\bar{L}/2)\gg1$ in the approximate equalities. In the expressions \eqref{adiab_inv_appr1}, \eqref{adiab_inv_appr2}, the principal branches of the multivalued functions are chosen. The adiabatic invariants corresponding to the infinite motion are responsible for the oscillations of the spectral density \eqref{quasi_spectr_dens} with a variation of $L$. The contribution of such terms to the effective action vanishes at $L\rightarrow+\infty$. As for $I_{23}$, we deduce
\begin{equation}
    \frac{\de I_{23}}{I^0_{23}}=-\frac{\bar{c}^2}{4c}\Big(1+\sqrt{\frac{\la}{c}}\Big),\qquad \Big|\frac{\de I_{23}}{I^0_{23}}\Big|\leq\Big|\frac{\bar{c}^2}{2c}\Big|\approx\frac{|c|}{8},
\end{equation}
where $c<0$, $\la\in(c,0)$, and it is supposed that $c$ is small. The same estimate holds also for $c>0$, $\la\in(0,c)$. At large $\la$, the approximation \eqref{adiab_inv_appr} becomes wrong since $x_{2,3}(\la)$ tend to the singular points of $\ch^{-2}x$. However, it follows from formula \eqref{adiab_inv_asymp} that
\begin{equation}
    \frac{I_{23}-I_{23}^0}{I^0_{23}}=o(1),
\end{equation}
for $\la\rightarrow\infty$ because the points $y_{2,3}$ coincide for the potentials $V(x)$ and $V_0(x)$. The fact that the turning points $x_{2,3}$ and $x_{2,3}^0$ of the adiabatic invariants $I_{23}$ and $I_{23}^0$ come to the singular points $y_{2,3}$ at $|\la|\rightarrow\infty$ is easy to see from the plots of $\im V$ and $\im V_0$ presented in Fig. \ref{pots_fig}. For real $\la$, the turning points move along the level lines of $\im V=0$ and $\im V_0=0$.

Thus we can state that the spectral density of the Hamiltonian with the potential $V$, its smooth part (the Thomas-Fermi contribution), and the oscillating part with the leading exponentially suppressed corrections are well approximated in the leading non-trivial order in the small parameter $c$ by the spectral density of the Hamiltonian with the potential $V_0$. Consequently, in evaluating the integral over $\la$ in \eqref{one_loop_st}, we can use the spectral density corresponding to the Hamiltonian with the potential $V_0$.

\begin{figure}[t]
\newcounter{lwdth}\setcounter{lwdth}{33564238}
\newlength{\lwdtl}\setlength{\lwdtl}{33564238 sp}
\newcounter{wdth}\setcounter{wdth}{\value{lwdth}}\setcounter{wdth}{\value{wdth}/65536}
\newcounter{hght}\setcounter{hght}{\value{lwdth}}\setcounter{hght}{\value{wdth}/65536}\setcounter{hght}{\value{wdth}*\real{0.5}}
\newcounter{wdth1}\setcounter{wdth1}{\value{wdth}*\real{0.5}}\setcounter{wdth1}{\value{wdth1}+17}
\newcounter{hght1}\setcounter{hght1}{\value{hght}*\real{0.5}}
\newcounter{hght2}\setcounter{hght2}{\value{hght}-15}
\centering
\begin{picture}(\value{wdth},\value{hght2})(-34,-10)
\put(0,0){\mbox{\includegraphics*[width=0.45\lwdtl ]{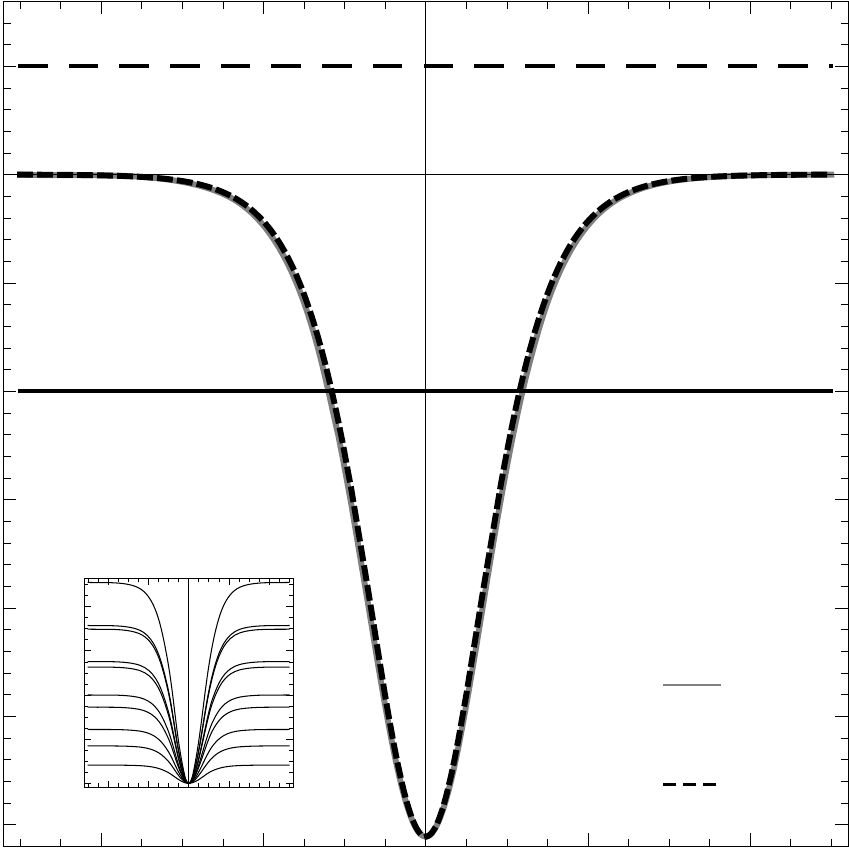}}}
\put(\value{wdth1},\value{hght1}){\mbox{\includegraphics*[width=0.4\lwdtl]{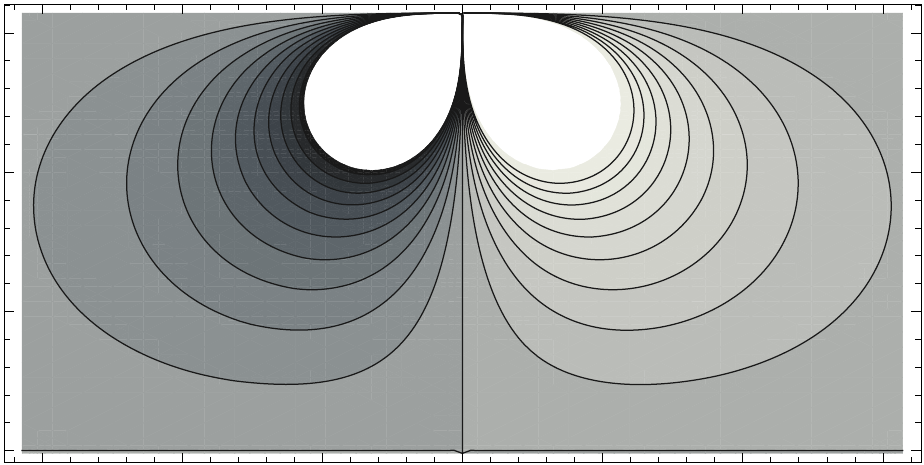}}}
\put(\value{wdth1},0){\mbox{\includegraphics*[width=0.4\lwdtl]{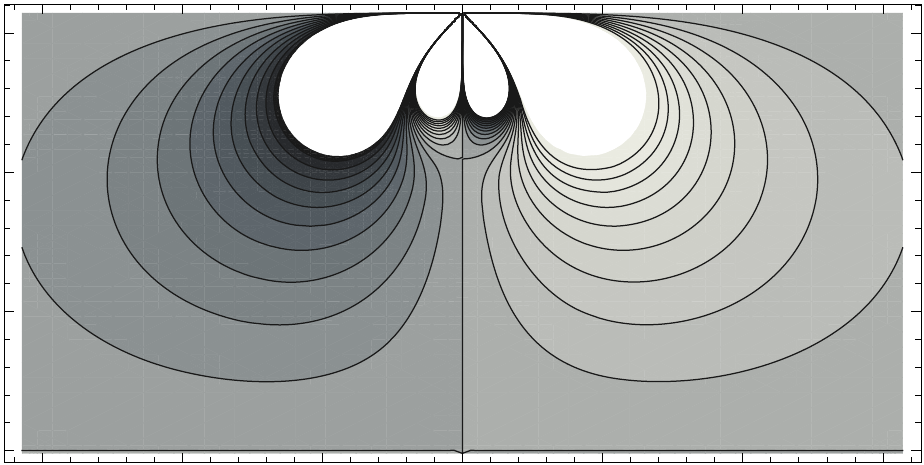}}}
\put(200,42){\mbox{$V(x)$}}
\put(200,15){\mbox{$V_0(x)$}}
\put(80,42){\mbox{$\frac{\delta V}{V}$}}
\put(76,112){\mbox{$x_2$}}
\put(145,112){\mbox{$x_3$}}
\put(126,131){\mbox{$\sigma_3$}}
\put(95,131){\mbox{$\sigma_2$}}
\put(6,203){\mbox{$\sigma_1$}}
\put(6,218){\mbox{$x_1$}}
\put(214,218){\mbox{$x_4$}}
\put(214,203){\mbox{$\sigma_4$}}
\put(113,-10){\mbox{$0$}}
\put(156.6,-10){\mbox{$2$}}
\put(200.2,-10){\mbox{$4$}}
\put(60,-10){\mbox{$-2$}}
\put(15.5,-10){\mbox{$-4$}}
\put(48.5,10){\mbox{\scriptsize$0$}}
\put(70.5,10){\mbox{\scriptsize$4$}}
\put(20.5,10){\mbox{\scriptsize$-4$}}
\put(6.5,15){\mbox{\scriptsize$0.00$}}
\put(6.5,39){\mbox{\scriptsize$0.04$}}
\put(6.5,63.5){\mbox{\scriptsize$0.08$}}
\put(-30,3.2){\mbox{$-0.30$}}
\put(-30,62){\mbox{$-0.20$}}
\put(-30,120.8){\mbox{$-0.10$}}
\put(-21.5,179.6){\mbox{$0.00$}}
\put(-21.5,209){\mbox{$0.05$}}
\put(-40,112){\mbox{a)}}
\put(258,-0.4){\mbox{$0.0$}}
\put(258,30.7){\mbox{$0.5$}}
\put(258,61){\mbox{$1.0$}}
\put(258,91.7){\mbox{$1.5$}}
\put(247,45.7){\mbox{c)}}
\put(368.85,-10){\mbox{$0.0$}}
\put(430.5,-10){\mbox{$1.0$}}
\put(298.8,-10){\mbox{$-1.0$}}
\put(464,91){\circle{13}}
\put(461,89){\mbox{$x$}}
\put(258,128){\mbox{$0.0$}}
\put(258,158.7){\mbox{$0.5$}}
\put(258,189){\mbox{$1.0$}}
\put(258,220.1){\mbox{$1.5$}}
\put(247,174.1){\mbox{b)}}
\put(368.85,118.4){\mbox{$0.0$}}
\put(430.5,118.4){\mbox{$1.0$}}
\put(298.8,118.4){\mbox{$-1.0$}}
\put(464,219.4){\circle{13}}
\put(461,217.4){\mbox{$x$}}
\end{picture}
\caption{{\footnotesize Panel a): The potentials $V(x)$ and $V_0(x)$ at $L=10$, $m=1$, $b=1$, $\bar{c}=-1/6$, and $c$ given in \eqref{ccbar}. The turning points of $V_0(x)$ are denoted by $\s_k$, while the turning points of the potential $V(x)$ are $x_k$. Inset: The relative error $\de V/V$ for $\bar{c}$ changing in the interval from $-1/6$ to $1/6$ with the step $1/10$, and $c$ given in \eqref{ccbar}. Panels b) and c): The level lines of $\im V_0(x)$ and $\im V(x)$, respectively, at $m=1$, $b=1$, $\bar{c}=-1/4$, and $c$ given in \eqref{ccbar}.}}
\label{pots_fig}
\end{figure}

\paragraph{Exact solution.}

Let us consider now the Schr\"{o}dinger equation \eqref{Schrod_eqn} with the potential \eqref{TP_pot}. The exact solution of this equation needed for us is given in the problem 4, \cite{LandLifshQM.11}, Sect. 25 (see also \cite{BagGit1}):
\begin{equation}\label{exact_sol}
\begin{split}
    \psi&=\Big(\frac{1-\xi^2}{4}\Big)^{-\frac{ik}{2b}}F\Big(-\frac{ik}{b}-s,1-\frac{ik}{b}+s;1-\frac{ik}{b};\frac{1-\xi}{2}\Big)=\\
    &=\Big(\frac{1-\xi^2}{4}\Big)^{-\frac{ik}{2b}}\frac{\Ga\big(1-\frac{ik}{b}\big)\Ga\big(\frac{ik}{b}\big)}{\Ga(-s)\Ga(s+1)}F\Big(-\frac{ik}{b}-s,1-\frac{ik}{b}+s;1-\frac{ik}{b};\frac{1+\xi}{2}\Big)\\ &+\Big(\frac{1+\xi}{1-\xi}\Big)^{\frac{ik}{2b}}\frac{\Ga\big(1-\frac{ik}{b}\big)\Ga\big(-\frac{ik}{b}\big)}{\Ga\big(-\frac{ik}{b}-s\big)\Ga\big(-\frac{ik}{b}+s+1\big)}F\Big(-s,s+1;1+\frac{ik}{b};\frac{1+\xi}{2}\Big),
\end{split}
\end{equation}
where $k=\sqrt{\la}$, $\xi=\tah bx$, and $s=(\sqrt{1-4\bar{m}^2c}-1)/2$. This solution passes into $e^{ikx}$ for $x\rightarrow+\infty$, i.e., this is the Jost solution $f^+(k,x)$ (see \eqref{psi12}, \eqref{psi12_asympt}). Therefore, the coefficient at $e^{ikx}$ of the asymptotics \eqref{exact_sol} for $x\rightarrow-\infty$ is exactly $s_{11}^{-1}(k)=t^{-1}(k)$ (see \eqref{psi12}). This asymptotics is readily found, and we obtain
\begin{equation}
    t(k)=\frac{\Ga\big(\al_+-\frac{ik}{b}\big)\Ga\big(\al_--\frac{ik}{b}\big)}{\Ga\big(1-\frac{ik}{b}\big)\Ga\big(-\frac{ik}{b}\big)},\qquad\al_\pm:=\frac12\pm\sqrt{\frac14-\bar{m}^2c}.
\end{equation}
The coefficient $c_1$ determining the leading asymptotics of the transition amplitude for the potential \eqref{TP_pot} is equal to
\begin{equation}
    c_1=im\bar{m}c.
\end{equation}
As a result, we have for the contribution \eqref{energy_zero_fin} to the instantaneous energy of zero-point fluctuations
\begin{equation}\label{integral}
    m\bar{m}\int_1^\infty \frac{dk}{2\pi}\sqrt{k^2-1}\Big[\psi(\bar{m}k+\al_+) +\psi(\bar{m}k+\al_-) -2\psi(\bar{m}k)-\frac{1}{\bar{m}k}-\frac{c}{k^2}\Big],
\end{equation}
where $\psi(x)=\Ga'(x)/\Ga(x)$.

In order to find the expansion of this integral at large $\bar{m}$, we employ the representation \cite{GrRy.6}
\begin{equation}\label{psi_expan}
    \psi(1+x)=-\ga+\sum_{k=2}^\infty(-1)^k\zeta(k)x^{k-1}=-\ga+\int_{\tilde{C}}\frac{dz x^{z-1}\zeta(z)}{2i\sin\pi z},
\end{equation}
where the contour $\tilde{C}$ goes from the top down along the line $\re z=3/2$. The contour $\tilde{C}$ can be displaced to the left or to the right in the interval $\re z\in(1,2)$. The last representation of $\psi(1+x)$ is valid for $\arg x\in(-\pi,\pi)$. Further, we need to substitute this representation into \eqref{integral} and evaluate the integral. At that, the Euler constants in \eqref{psi_expan} are canceled out. It is useful to make the replacement
\begin{equation}
    \sqrt{k^2-1}\rightarrow(k^2-1)^{1/2-\nu},\quad \nu\in \mathbb{C}.
\end{equation}
It is clear that, having made this replacement, the integral \eqref{integral} becomes the analytic function of $\nu$ for $\re \nu\in(-1,3/2)$. Therefore, if we find the expression for \eqref{integral} in terms of analytic function of $\nu$ in some region of the $\nu$ plane belonging to this strip and then put $\nu$ to zero, the result will be the same as if we evaluate \eqref{integral} directly. In the strip $\re\nu\in(5/4,3/2)$, we can change the order of integration over $k$ and $z$ in \eqref{integral}. Then the following integral arises
\begin{equation}
    \int_1^\infty dk(k^2-1)^{1/2-\nu}(k+\be_\pm)^{z-1}=\frac{\Ga(2\nu-z-1)\Ga(3/2-\nu)}{2^{2\nu-z-1}\Ga(\nu-z+1/2)}F(1-z,2\nu-z-1;\nu-z+1/2;(1-\be_\pm)/2),
\end{equation}
where $\be_\pm:=\bar{m}^{-1}(\al_\pm-1)$. As for the $\psi$ function in \eqref{integral} without $\al_\pm$, we have to set $\be_\pm=-\bar{m}^{-1}$. So the $\psi$ functions in \eqref{integral} give the contributions
\begin{equation}\label{integral1}
    \frac{m\bar{m}}{4^\nu\pi i}\Ga\Big(\frac32-\nu\Big)\int_{\tilde{C}}dz\frac{\Ga(1-z)\zeta(1-z)\Ga(2\nu-z-1)}{\cos(\pi z/2)}\frac{F(1-z,2\nu-z-1;\nu-z+1/2;(1-\be_\pm)/2)}{\Ga(\nu-z+1/2) (4\pi\bar{m})^{1-z}}.
\end{equation}
The integrals over $k$ of the last two terms in \eqref{integral} can be also evaluated exactly, but they will not be needed for us.

The last factor in the integrand of \eqref{integral1} is the entire function of $z$ provided $|(1-\be_\pm)/2|<1$. Inasmuch as, at $\bar{m}\rightarrow+\infty$,
\begin{equation}\label{hyper_arg}
    \frac{1-\be_\pm}2=\left\{
                        \begin{array}{ll}
                          \frac12(1\mp i\sqrt{c}); \\
                          \frac12,
                        \end{array}
                      \right.
\end{equation}
then $|(1-\be_\pm)/2|<1$ for $c\in(-1,3)$, i.e., $\bar{c}\in(-1,1)$. Consequently, the singularities of the integrand of \eqref{integral1} in the $z$ plane stem from the first factor only. In the complex half-plane $\re z\leq1$, the singular points (the poles) are located at
\begin{equation}
    z=1,0,-1,-3,-5,-7,\ldots,\qquad z=2\nu-1,2\nu,2\nu+1,2\nu+2,\ldots.
\end{equation}
We should set $\nu$ to zero in the final answer. Therefore, only three poles from the second series appear in the half-plane $\re z\leq1$.

\paragraph{Large mass expansion.}

Now we move the integration contour $\tilde{C}$ to the left. Then there will appear the contributions from the poles mentioned above. These contributions constitute the series in the decreasing powers of $\bar{m}$, the power of $\bar{m}$ being determined by the pole position. The coefficients of this series also depend on $\bar{m}$ through $\be_\pm$ standing in the argument of the hypergeometric function. The hypergeometric function is analytic in the vicinity of the points \eqref{hyper_arg}, and $(1-\be_\pm)/2$ is an analytic function of $\bar{m}^{-1}$ in the vicinity of the point $\bar{m}^{-1}=0$. So the series coefficients are expandable in a Taylor series in $\bar{m}^{-1}$ with non-zero radius of convergence. Substituting series into series, we obtain the series in the decreasing powers of $\bar{m}$ with a finite number of terms at the every power of $\bar{m}$. Moreover, as we know from the general theory described in the previous section, the large mass expansion of the integral \eqref{integral} contains only even powers of $\bar{m}$, i.e., if we take into account the contributions of the last two terms in \eqref{integral} and put $\nu=0$ then all the terms at the odd powers of $\bar{m}$ and possible terms at $\ln\bar{m}$ are canceled out. If the contour $\tilde{C}$ could be moved to the left to $-\infty$ then \eqref{integral} would be expandable in a convergent Laurent series in $\bar{m}^{-2}$. However, this cannot be done.

Indeed, let us rewrite the second factor in the integrand as
\begin{equation}
    \frac{F(1-z,2\nu-z-1;\nu-z+\frac12;\frac{1-\be_\pm}2)}{\Ga(\nu-z+\frac12) (4\pi\bar{m})^{1-z}}=\Big(\frac{1+\be_\pm}{2}\Big)^{\frac32-\nu}\frac{F(\nu-\frac12,\frac32-\nu;\nu-z+\frac12;\frac{1-\be_\pm}2)}{\Ga(\nu-z+\frac12) [2\pi\bar{m}(1+\be_\pm)]^{1-z}}.
\end{equation}
For large $\la$, $\re\la>0$, the following asymptotic expansion holds \cite{KhwDaalIV}:
\begin{equation}
    F(a,b;c+\la;x)=\sum_{s=0}^{n-1}\frac{a_sb_s}{(c+\la)_s s!}x^s+O(\la^{-n}),
\end{equation}
i.e., in the leading order,
\begin{equation}\label{hypergeom_asympt}
    F(a,b;c+\la;x)=1+O(\la^{-1}).
\end{equation}
For $\re z<-1$, we can set $\nu=0$ in \eqref{integral1}. Then, on shifting the contour $\tilde{C}$ to the region of large $|z|$, $\re z<0$, we have for \eqref{integral1} in the leading order
\begin{equation}\label{nonanalyt_contrb}
    \frac{m\bar{m}}{2i\sqrt{\pi}}\Big(\frac{1+\be_\pm}{2}\Big)^{\frac32}\int_{\tilde{C}}dz\frac{\Ga(1-z)\Ga(-1-z)}{\Ga(\frac12-z)\cos\frac{\pi z}2}[2\pi\bar{m}(1+\be_\pm)]^{z-1}.
\end{equation}
Using the asymptotic expansion of the gamma functions, we obtain
\begin{equation}\label{nonanalyt_apprx}
    \frac{m}{4i(2\pi)^{3/2}}(1+\be_\pm)^{\frac12}\int_{\tilde{C}}\frac{dz}{-z}\frac{e^{-z(\ln(-z)-1)}}{\cos(\pi z/2)}[2\pi\bar{m}(1+\be_\pm)]^{z}.
\end{equation}
As we see, the integrand grows at large negative $z$, and the contour $\tilde{C}$ cannot be moved to the left to $-\infty$.

For the contribution \eqref{integral} to the one-loop effective action to be expandable in a convergent Laurent series in $\bar{m}^{-2}$, it is necessary that
\begin{equation}\label{integral2}
    \int_{-T/2}^{T/2}dt\Big\{(1+\be_+)^{\frac12}[2\pi\bar{m}(1+\be_+)]^{z}+(1+\be_-)^{\frac12}[2\pi\bar{m}(1+\be_-)]^{z}-2(1+\be_0)^{\frac12}[2\pi\bar{m}(1+\be_0)]^{z} \Big\}=0,
\end{equation}
where $\be_0:=-\bar{m}^{-1}$. Only in this case can one move the integration contour $\tilde{C}$ to the left to $-\infty$. Generally, it may happen that the contributions from the expansion of the hypergeometric functions standing at the higher powers of $\bar{m}$ are accumulated and cancel the growth of the coefficient at $\bar{m}^{-2k}$ following from \eqref{nonanalyt_apprx}. However, it follows from the asymptotics \eqref{hypergeom_asympt} that this does not occur. The integral \eqref{integral2}, in the leading order in $\bar{m}$, can be cast into the form
\begin{equation}\label{integral3}
    \int_{0}^{T/2}dt(2\pi\bar{m})^z\Big[(1+i\sqrt{c})^{z+\frac12} +(1-i\sqrt{c})^{z+\frac12} +(1+\sqrt{c})^{z+\frac12} +(1-\sqrt{c})^{z+\frac12} -4\Big],
\end{equation}
where we have taken into account the evenness of $b(t)$ and the oddness of $c(t)$ at small $c(t)$. For $c(t)$ small, the integral \eqref{integral3} can be also approximately written as
\begin{equation}
    2\int_{0}^{T/2}dt(2\pi\bar{m})^z\Big\{\cos\big[(z+1/2)\sqrt{c}\big]+\ch\big[(z+1/2)\sqrt{c}\big] -2\Big\}.
\end{equation}
It is clear that neither this expression nor \eqref{integral3} are equal to zero for the arbitrary functions $b(t)$ and $c(t)$ satisfying \eqref{exampl_phi}, \eqref{ccbar}.

Let us find the explicit expression for the non-analytic in $\bar{m}^{-2}$ terms \eqref{nonanalyt_contrb} in \eqref{integral} in the leading order in $\bar{m}\rightarrow+\infty$ assuming, for simplicity, that $b(t)=const$. To this aim, we shift the contour $\tilde{C}$ in \eqref{nonanalyt_contrb} to the left until then the modulus of the integrand (apart from $\cos^{-1}(\pi z/2)$) reaches the minimum. This occurs at the different positions of the contour $\tilde{C}$ for the different $\psi$ functions entering \eqref{integral}. As seen from \eqref{nonanalyt_apprx}, for $c>0$, the nearest to the $z$ plane origin extremum point is situated approximately at
\begin{equation}\label{c_pos}
    z_{ext}=-2\pi(\bar{m}-1),
\end{equation}
and for $c<0$ at
\begin{equation}\label{c_neg}
    z_{ext}=-2\pi\bar{m}(1-\sqrt{-c})+\pi.
\end{equation}
Let us choose the contour $\tilde{C}$ be the same for all the $\psi$ functions. In the case $c>0$, it passes near \eqref{c_pos} and, in the case $c<0$, it goes near \eqref{c_neg} with $c\rightarrow c_0:=\min c(t)$.

Let $p\sim-\bar{m}$. We parameterize the contour $\tilde{C}$ as
\begin{equation}
    z=p+ix,\qquad x\in \mathbb{R}.
\end{equation}
Then, developing the integrand of \eqref{nonanalyt_contrb} (apart from $\cos^{-1}(\pi z/2)$) as a series in $x$ in the neighbourhood of the point $z=p$, we obtain for non-vanishing at $p\rightarrow -\infty$ terms
\begin{equation}\label{expns}
    \frac{\sqrt{2\pi}}{ap}e^{-(p+\ln(a/p))+ix\ln(a/p)},
\end{equation}
where the notation $a:=2\pi\bar{m}(1+\be_\pm)$ has been introduced, and it has been taken into account that, in integrating over $x$, the order of $|x|\lesssim1$ due to the ``cut-off'' factor $\cos^{-1}(\pi z/2)$.

For $c>0$, we put $p=-2n$, $n=[\pi\bar{m}]$. Then
\begin{equation}
    \cos^{-1}(\pi z/2)=(-1)^n\ch^{-1}(\pi x/2).
\end{equation}
Substituting the expansion \eqref{expns} into \eqref{nonanalyt_contrb} and integrating over $x$, we derive that the non-analytic contribution of one $\psi$ function to \eqref{integral} is given by
\begin{equation}\label{nonanalyt_int}
    m\bar{m}(-1)^ne^{-2n}\frac{(1+\be_\pm)^{3/2}}{a^2+p^2}\frac{p}{a}.
\end{equation}
Collecting the contributions of all the $\psi$ functions, we find for the non-analytic exponentially suppressed contributions to \eqref{integral}
\begin{equation}
    (-1)^n\frac{be^{-2n}}{4\pi^2}\Big[\frac{(1+i\sqrt{c})^{1/2}}{1+(1+i\sqrt{c})^2} +\frac{(1-i\sqrt{c})^{1/2}}{1+(1-i\sqrt{c})^2} -1\Big],
\end{equation}
in the limit $\bar{m}\rightarrow+\infty$. For $c<0$, we put $p=-2n$, $n=[\pi\bar{m}(1-\sqrt{-c_0})]$. Using \eqref{nonanalyt_int}, we obtain in this case that the non-analytic exponentially suppressed contributions to \eqref{integral} are
\begin{equation}
    (-1)^n\frac{be^{-2n}}{4\pi^2}(1-\sqrt{-c_0})\Big[\frac{(1+\sqrt{-c})^{1/2}}{(1+\sqrt{-c})^{2}+(1-\sqrt{-c_0})^2} +\frac{(1-\sqrt{-c})^{1/2}}{(1-\sqrt{-c})^{2}+(1-\sqrt{-c_0})^2} -\frac{2}{1+(1-\sqrt{-c_0})^2}\Big],
\end{equation}
in the leading order at $\bar{m}\rightarrow+\infty$.

The derived non-analytic in $m^{-2}$ contributions to the one-loop effective action are proportional to
\begin{equation}\label{gen_struct}
    e^{-km/b}.
\end{equation}
As shown in \cite{KalKaz2.12}, the quantity making dimensionless the mass in the exponent cannot be a scalar constructed in terms of the metric, its curvature, and the covariant derivatives of the curvature. It is evident in our case since the metric curvature is zero. It is clear that this quantity cannot contain the non-local contributions of the form $(\nabla^2+m^2)^{-1}$ etc. since, in the large mass limit we consider, these contributions become local. It was shown in \cite{KalKaz2.12}, in a certain approximation, that the quantity making dimensionless the mass in the exponent is a scalar made of the metric and the covariant derivatives of the time-like vector field defining the Hamiltonian. The contributions \eqref{gen_struct} cannot be canceled out by the counterterms polynomial in momenta, the fields, and the coupling constant\footnote{Some necessary facts from renormalization theory are collected in App. \ref{Reg_Renorm}.}. In the one-loop approximation, these contributions are not canceled out by the contributions of the ghosts and gravitons (in the four-dimensional spacetime) due to the fact that the background propagators of the ghosts and gravitons do not depend on the mass of a scalar field. We shall discuss some more this result in Conclusion.

\paragraph{Method of comparison equations.}

Let us estimate, in the semiclassical approximation, the correction to the transition amplitude $t(k)$ resulting from the potential difference $V-V_0$. To this end, we employ the so-called method of comparison equations (see the review \cite{BerrMount}, Sect. 4). According to this method, a good approximation to the solution of the Schr\"{o}dinger equation with the potential $V$ is
\begin{equation}\label{appr_sol}
    u(\la,x)=\Big[\frac{\la-V_0(\s(x))}{\la-V(x)}\Big]^{1/4}\psi(\la,\s(x)),
\end{equation}
where $\psi(\la,\s)$ is the exact solution of the Schr\"{o}dinger equation with the potential $V_0$ and the eigenvalue $\la$, i.e., \eqref{exact_sol} in our case. The function $\s(x)$ is determined by the equation
\begin{equation}\label{sigma_x}
    \int_0^{\s}dy\sqrt{\tilde{\la}-V_0(y)}=\int_0^{x}dy\sqrt{\tilde{\la}-V(y)},
\end{equation}
where $\tilde{\la}:=\la/m^2$. The constant $c$ entering the potential $V_0$ is found from the turning point matching condition
\begin{equation}\label{concord}
    \int_0^{\s_3}dy\sqrt{\tilde{\la}-V_0(y)}=\int_0^{x_3}dy\sqrt{\tilde{\la}-V(y)},
\end{equation}
where $\s_3$ and $x_3$ are the turning points for the potentials $V_0$ and $V$, respectively. In general, $c=c(\tilde{\la},\bar{c})$.

To apply this method exactly is apparently impossible in our case. However, if one takes into account that, at the small $c$ and $\bar{c}$, the difference of the potentials $V_0$ and $V$ is small, \eqref{concord} can be approximately written as
\begin{equation}
    \int_0^{\s_3}dy\sqrt{\tilde{\la}-V_0(y)}=\int_0^{x_3}dy\sqrt{\tilde{\la}-V(y)}\approx \int_0^{\s_3}dy\sqrt{\tilde{\la}-V_0(y)}- \int_0^{\s_3}\frac{dy\de V(y)}{2\sqrt{\tilde{\la}-V_0(y)}}.
\end{equation}
The last term has the form
\begin{equation}
    -\int_0^{\s_3}\frac{dy}{2\sqrt{\tilde{\la}-V_0(y)}}\Big[\frac{2\bar{c}-c}{\ch^2(bx)}+\frac{\bar{c}^2}{\ch^4(bx)}\Big].
\end{equation}
Thus, in the leading non-trivial order in small $c$, the turning point matching condition leads to
\begin{equation}
    c=2\bar{c},
\end{equation}
i.e. in fact, to the condition \eqref{ccbar}.

In order to find the asymptotics of the approximate solution \eqref{appr_sol}, it is necessary to find the asymptotic behavior of $\s(x)$ at $x\rightarrow\pm\infty$. Writing \eqref{sigma_x} at large $\s$ and $x$ as
\begin{equation}
    \int_{\s_3}^\s dy\sqrt{V_0(y)-\tilde{\la}}=\int_{x_3}^xdy\sqrt{V(y)-\tilde{\la}}\approx\int_{\s_3}^x dy\sqrt{V_0(y)-\tilde{\la}}+\int_{\s_3}^x\frac{dy\de V(y)}{2\sqrt{V_0(y)-\tilde{\la}}},
\end{equation}
we come to
\begin{equation}
    \int_x^\s dy\sqrt{V_0(y)-\tilde{\la}}\approx\int_{\s_3}^x \frac{dy\de V(y)}{2\sqrt{V_0(y)-\tilde{\la}}}.
\end{equation}
For $x$ and $\s$ tending to $+\infty$, we have
\begin{equation}
    b(\s-x)\sqrt{-\tilde{\la}}\approx\frac{i}{2}\de I_{12}=i\sqrt{\tilde{\la}} \frac{\bar{c}^2}{4c}\Big[\Big(\sqrt{\frac{c}{\tilde{\la}}}-\sqrt{\frac{\tilde{\la}}{c}}\Big)\arth\sqrt{\frac{c}{\tilde{\la}}} +1\Big],
\end{equation}
where the dependence on $b$ is restored, and recall, the principal branches of the multivalued functions are taken. Consequently, for $x$ and $\s$ tending to $+\infty$ and $-\infty$, we obtain
\begin{equation}
    \s=x+\frac{\bar{c}^2}{4bc}\Big[\Big(\sqrt{\frac{c}{\tilde{\la}}}-\sqrt{\frac{\tilde{\la}}{c}}\Big)\arth\sqrt{\frac{c}{\tilde{\la}}} +1\Big],\qquad
    \s=x-\frac{\bar{c}^2}{4bc}\Big[\Big(\sqrt{\frac{c}{\tilde{\la}}}-\sqrt{\frac{\tilde{\la}}{c}}\Big)\arth\sqrt{\frac{c}{\tilde{\la}}} +1\Big],
\end{equation}
respectively. Now it is not difficult to see that the use of this method results in the correction to $\ln t(k)$ of the form
\begin{equation}
    \ln t_{corr}(k)=i\frac{k\bar{c}^2}{2bc}\Big[\Big(\frac{m\sqrt{-c}}{-ik}-\frac{-ik}{m\sqrt{-c}}\Big)\arth\frac{m\sqrt{-c}}{-ik} +1\Big].
\end{equation}
So we have
\begin{equation}
    ic_1^{corr}=\frac{\bar{c}^2}{3}m\bar{m},
\end{equation}
and the correction to the contribution \eqref{integral}
\begin{equation}
    -m\bar{m}\frac{\bar{c}^2}{c}\int_1^\infty dk\sqrt{k^2-1}\Big[1-\frac{k}{\sqrt{-c}}\arcth\frac{k}{\sqrt{-c}}+\frac{c}{3k^2}\Big].
\end{equation}
As a result, we see that the difference of the potentials $V$ and $V_0$ just leads to small corrections to the conformal anomaly and the coefficient at $m^2$ in the leading non-trivial order in $c$. The presence of the correction to the conformal anomaly ought to be expected since $V\neq V_0$.

\section{Conclusion}

The main result of the present paper can be formulated as a theorem.

\begin{thm}
  The real part of the one-loop effective action, in the background field gauge, induced by a massive scalar field on the two-dimensional Minkowski spacetime depends on the choice of the smooth splitting of the spacetime onto the space and time (i.e., on the choice of the time coordinate) for $t\in(t_{in},t_{out})$, and this dependence cannot be canceled out by the polynomial in derivatives and fields counterterms added to the initial classical action.
\end{thm}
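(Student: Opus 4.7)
The plan is to exhibit an explicit family of smooth foliations of the two-dimensional Minkowski spacetime for which the real part of the one-loop effective action acquires contributions that cannot be absorbed by any local counterterm polynomial in the fields and their derivatives. First I would invoke the construction of Sect. \ref{Chang_Coord}: fix a function $\vf(t,x)$ in the class \eqref{phi_class}, \eqref{extrem_restr} (for definiteness, of the form \eqref{exampl_phi}), which determines via \eqref{ch_coord} a smooth diffeomorphism identical outside $|t|\leq T/2$ and at spatial infinity. The resulting metric is flat, and by the theorem of Sect. \ref{Scal_Field} the imaginary part of $\Ga^{(1)}_{in-out}$ vanishes in the regularization removal limit; so the full obstruction must live in $\re\Ga^{(1)}_{in-out}$. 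Next, using the adiabatic estimates \eqref{Landau-Zener_est}, \eqref{skew_approx} of Sect. \ref{Chang_Coord}, I would argue that for sufficiently large adiabaticity parameter $\tau_a$ the first term in \eqref{inout_eff_act} is negligible, so the entire dependence on the splitting is contained in $\Ga^{(1)}_{st}=-\int dt\, E_0(t)$ with $E_0(t)$ as in \eqref{energy_zero1}.

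The central step is to show that $E_0(t)$, integrated over $t$, contains a contribution non-analytic in $m^{-2}$. Here I would pick the potential \eqref{TP_pot} and use the exact transition amplitude $t(k)$ computed through the hypergeometric solution \eqref{exact_sol} of the P\"oschl--Teller problem. Substituting into \eqref{energy_zero_fin} and using the Mellin representation \eqref{psi_expan} of the digamma function, the contribution reduces to the contour integral \eqref{integral1}. Moving the contour $\tilde{C}$ to the left picks up the analytic-in-$\bar{m}^{-2}$ poles, but the leading asymptotic analysis \eqref{nonanalyt_contrb}--\eqref{nonanalyt_apprx} shows that the integrand \emph{grows} at large negative $\re z$, preventing a fully convergent Laurent expansion. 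Evaluating the saddle of \eqref{nonanalyt_apprx} via \eqref{c_pos}, \eqref{c_neg}, I would extract the explicit exponentially suppressed, non-analytic pieces $\propto e^{-2[\pi\bar{m}]}$ and $\propto e^{-2[\pi\bar{m}(1-\sqrt{-c_0})]}$, i.e., contributions of the general structure $e^{-km/b}$ as in \eqref{gen_struct}. The step I expect to be the main obstacle is verifying that these pieces survive the $t$-integration in \eqref{one_loop_st} and do not accidentally cancel for generic smooth $b(t)$, $\bar{c}(t)$ in the class \eqref{exampl_phi}; this amounts to showing the non-vanishing of the integral \eqref{integral2}, which by the asymptotics \eqref{hypergeom_asympt} is an honest non-triviality and requires a careful generic-position argument.

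Because the exactly solvable potential $V_0$ is only an approximation to the physical potential $V$ arising from $\vf\in$ \eqref{phi_class}, I would then need to control the discrepancy. This is where the method of comparison equations of Sect. \ref{Example} enters: the approximation \eqref{appr_sol} with the matching \eqref{concord} leads to a correction $\ln t_{corr}(k)$ that is analytic in $c$ and contributes at most to the conformal anomaly and the $O(m^2)$ part of $\Ga^{(1)}_{st}$, hence cannot cancel the exponentially suppressed terms. Combined with the semiclassical estimate \eqref{adiab_inv_appr} for the relative error in the adiabatic invariants, this establishes that the non-analytic contributions extracted from $V_0$ genuinely reside in the effective action for the original family $\vf$.

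Finally, I would close the argument by appealing to renormalization theory (App. \ref{Reg_Renorm}): local counterterms polynomial in derivatives, fields, and couplings generate only contributions analytic in $m^{-2}$ (or at worst logarithmic, already matched by the conformal anomaly in \eqref{energy_zero1}). Since the exponentials $e^{-km/b}$ possess an essential singularity at $m^2=\infty$ and involve the inverse scale $b^{-1}$ of the function $\vf$, which by \cite{KalKaz2.12} cannot be expressed in terms of the metric, its curvature, and covariant derivatives of the curvature (the metric here is flat), no polynomial counterterm can cancel them. Thus $\re\Ga^{(1)}_{in-out}$ depends on the foliation in a manner not eliminable by renormalization, proving the theorem.
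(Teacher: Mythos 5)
Your proposal is correct and follows essentially the same route as the paper: the change of coordinates of Sect.~\ref{Chang_Coord} with the adiabatic suppression of the first term in \eqref{inout_eff_act}, the reduction to the zero-point energy and the P\"oschl--Teller approximation controlled by the method of comparison equations, the Mellin-contour extraction of the exponentially suppressed terms \eqref{gen_struct} with the non-vanishing of \eqref{integral2}, and the final appeal to the polynomial form of counterterms. The step you flag as the main obstacle---that the $t$-integration does not cancel the non-analytic pieces for generic $b(t)$, $\bar{c}(t)$---is resolved in the paper exactly as you anticipate, via the asymptotics \eqref{hypergeom_asympt} and the explicit non-vanishing of \eqref{integral3}.
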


A similar statement for a stationary slowly varying in space metric background was formulated and proved in \cite{KalKaz2.12}, but this proof involves many approximations the validity of which is rather hard to control on the mathematical level of rigor. As shown in \cite{KalKaz2.12} and it is obvious from the cause of dependence of the effective action on the splitting (see below), this statement holds for the four-dimensional spacetime as well. Therefore, the one-loop effective action induced by the massive scalar field, gravitons, and ghosts, in the background field gauge, is not invariant under diffeomorphisms acting on the dynamical fields of the theory on the solutions to the equations of motion of the fields $\phi$ and ghosts. The diffeomorphisms are supposed to be identical at the spatial and temporal infinities.

To avoid misunderstanding, let us stress that we distinguish the notions of a diffeomorphism and a change of coordinates. A change of coordinates is the transition from one chart to another in the bundle, while a diffeomorphism moves the manifold points and transports the tensor fields appropriately. For example, the action functional
\begin{equation}\label{action_ex}
    S[g]=\int d^4x\sqrt{|g|} (g^{\mu\nu}\partial_\mu\ln\sqrt{\xi^2}\partial_\nu\ln\sqrt{\xi^2})^2,
\end{equation}
where $\xi^\mu$ is some non-dynamical field and $\xi^2=g_{\mu\nu}\xi^\mu\xi^\nu$, is invariant under a change of coordinates. Nevertheless, \eqref{action_ex} is not invariant under diffeomorphisms acting on the dynamical fields of the theory, i.e., this action is not gauge invariant\footnote{The gauge transformations act solely on the dynamical fields of the theory, i.e., on the fields with respect to which the action is minimized. This becomes quite evident, for example, in the Hamiltonian formalism.}. Notice, in this respect, a certain peculiarity that sometimes can lead astray. If $g_{\mu\nu}$ is a stationary metric and $\xi^\mu$ is the Killing vector then the energy-momentum tensor resulting from a variation of \eqref{action_ex} with respect to $g_{\mu\nu}$ at fixed $\xi^\mu$ is covariantly divergenceless \cite{FrZel.11}. This, however, does not mean that \eqref{action_ex} is generally covariant since the higher Noether identities ensuing from a variation of the energy-momentum tensor divergence with respect to the metric do not hold.

These reasonings on the invariance of a theory with respect to diffeomorphisms are immediately extended to the quantum effective action in the background field gauge \cite{DeWGAQFT.11,BuchOdinShap.11,WeinB2}. Then the Noether identities become the Ward ones:
\begin{equation}\label{ward_id_expct}
    \e^\al R_\al^i(\Phi)\frac{\de\Ga}{\de\Phi^i}\equiv0,
\end{equation}
where $\Phi$ is a complete set of the background fields and $R^i_\al$ are the gauge transformation generators. Expanding \eqref{ward_id_expct} in $\hbar$, we obtain, in the leading non-trivial order in $\hbar$,
\begin{equation}\label{grav_anom}
    \e^\al R_\al^i(\Phi)\frac{\de\Ga}{\de\Phi^i}=\hbar\e^\al R_{(0)\al}^i(\Phi)\frac{\de\Ga_{(1)}}{\de\Phi^i}+\cdots,
\end{equation}
on the solutions to the classical equations of motion. The Minkowski metric is evidently the solution to the Einstein equations without the cosmological constant. Therefore, the statement proved by us is equivalent to the statement on the presence of the one-loop non-perturbative quantum gravitational anomaly in the theory of the massive scalar field and gravitons in the Minkowski spacetime. The cause for the appearance of this anomaly in relativistic QFT is the necessity to split the field operators onto positive- and negative-frequency parts (see the discussion of this problem in \cite{DeWpaper.9}). As we saw in Sect. \ref{Hamilt_Diag}, such a natural polarization is given by the Hamiltonian of a theory. To construct the Hamiltonian, it is necessary to define a certain time-like vector field $\partial_t$, the contractions and covariant derivatives of which arise eventually in the one-loop effective action. A possible mechanism allowing to get rid of this anomaly is to endow the field $\xi^\mu\partial_\mu=\partial_t$ with its own dynamics. Then the effective action will be invariant under the action of diffeomorphisms on the equations of motion of the fields $\phi$, $\xi^\mu$, and ghosts. One can make the field $\xi^\mu$ dynamical in an infinity of ways. In a general case, $\xi^\mu$ is not a fundamental field of a theory but a local composite operator of the fields $\phi^a$, which realize a certain representation of the diffeomorphism group and such that $\xi^\mu$ transforms as a vector field under the action of diffeomorphisms. A thorough discussion of the most natural mechanism for cancelation of the quantum gravitational anomaly considered and the possible physical consequences of this mechanism can be found in \cite{qgadm}.

\appendix
\section{Evolution operator symbol}\label{Second_Quant}

For the reader convenience, we provide in this appendix some formulas and theorems from \cite{BerezMSQ1.4} without proofs. The proofs of these statements can be found in \cite{BerezMSQ1.4}.

Let
\begin{equation}\label{state_fock}
    |\Phi\ran=K_{\al_1\cdots\al_n}\frac{\hat{a}^\dag_{\al_1}\cdots \hat{a}^\dag_{\al_n}}{\sqrt{n!}}|0\ran
\end{equation}
be a state in the Fock space. The summation over all the repeating indices (including $n$) is assumed.
\begin{defn}
  The generating functional of the state \eqref{state_fock} is the functional
  \begin{equation}
    \Phi(\bar{a}):=K_{\al_1\cdots\al_n}\frac{\bar{a}_{\al_1}\cdots \bar{a}_{\al_n}}{\sqrt{n!}}.
  \end{equation}
\end{defn}

Let
\begin{equation}\label{oper_fock}
    \hat{A}=K_{\al_1\cdots\al_n;\be_1\cdots\be_m}\hat{a}^\dag_{\al_1}\cdots \hat{a}^\dag_{\al_n}\hat{a}_{\be_1}\cdots \hat{a}_{\be_m}
\end{equation}
be a normal ordered operator acting in the Fock space.
\begin{defn}
  The Wick symbol of the operator \eqref{oper_fock} is called the functional
  \begin{equation}
    A(\bar{a},a):=K_{\al_1\cdots\al_n;\be_1\cdots\be_m}\bar{a}_{\al_1}\cdots \bar{a}_{\al_n}a_{\be_1}\cdots a_{\be_m}.
  \end{equation}
\end{defn}

Let us introduce the notation for the states in the Fock space
\begin{equation}
    |\al_1,\cdots,\al_n\ran:=\frac{\hat{a}^\dag_{\al_1}\cdots \hat{a}^\dag_{\al_n}}{\sqrt{n!}}|0\ran,\qquad \lan \al_1,\cdots,\al_n|:=\lan0|\frac{\hat{a}_{\al_1}\cdots \hat{a}_{\al_n}}{\sqrt{n!}}.
\end{equation}
These states are not normalized to unity.

\begin{defn}
  The generating functional for the matrix form of the operator \eqref{oper_fock} is called the functional
  \begin{equation}
    \tilde{A}(\bar{a},a):=\lan\al_1,\cdots,\al_n|\hat{A}|\be_1,\cdots,\be_m\ran\frac{\bar{a}_{\al_n}\cdots \bar{a}_{\al_1}a_{\be_m}\cdots a_{\be_1}}{\sqrt{n!m!}}.
  \end{equation}
\end{defn}

The following relations hold
\begin{equation}\label{rels_symb}
\begin{alignedat}{4}
    &&\hat{a}_\al|\Phi\ran &\leftrightarrow\frac{\de\Phi(\bar{a})}{\de\bar{a}_\al},&\quad \hat{a}^\dag_\al|\Phi\ran &\leftrightarrow \bar{a}_\al\Phi(\bar{a}),&&\\
    \hat{a}_\al\hat{A} &\leftrightarrow \Big(a_\al+\frac{\de}{\de\bar{a}_\al}\Big)A(\bar{a},a),&\quad \hat{a}^\dag_\al\hat{A}&\leftrightarrow \bar{a}_\al A(\bar{a},a),&\quad \hat{A}\hat{a}_\al &\leftrightarrow A(\bar{a},a)a_\al,&\quad \hat{A}\hat{a}^\dag_\al &\leftrightarrow A(\bar{a},a)\Big(\bar{a}_\al+\frac{\overleftarrow{\de}}{\de a_\al}\Big),\\
    \hat{a}_\al\hat{A} &\leftrightarrow \frac{\de \tilde{A}(\bar{a},a)}{\de\bar{a}_\al},&\quad \hat{a}^\dag_\al\hat{A} &\leftrightarrow \bar{a}_\al \tilde{A}(\bar{a},a),&\quad \hat{A}\hat{a}_\al &\leftrightarrow \tilde{A}(\bar{a},a)a_\al,&\quad \hat{A}\hat{a}^\dag_\al &\leftrightarrow \tilde{A}(\bar{a},a)\frac{\overleftarrow{\de}}{\de a_\al}.
\end{alignedat}
\end{equation}
The Wick symbol and the generating functional for the matrix form of the operator are related as
\begin{equation}
   \tilde{A}(\bar{a},a)=A(\bar{a},a)e^{\bar{a}_\al a_\al}.
\end{equation}

Further, we consider the bosonic case that we are interested in. Let the two sets of the creation-annihilation operators $(\hat{a}_\al,\hat{a}^\dag_\al)$ and $(\hat{b}_\al,\hat{b}^\dag_\al)$ be related by a linear canonical transform
\begin{equation}\label{canon_trans}
    \left[
       \begin{array}{c}
         \hat{b} \\
         \hat{b}^\dag \\
       \end{array}
     \right]=\left[
               \begin{array}{cc}
                 \Phi & \Psi \\
                 \bar{\Psi} & \bar{\Phi} \\
               \end{array}
             \right]\left[
                      \begin{array}{c}
                        \hat{a} \\
                        \hat{a}^\dag \\
                      \end{array}
                    \right]+\left[
                              \begin{array}{c}
                                f \\
                                \bar{f} \\
                              \end{array}
                            \right],
\end{equation}
where
\begin{equation}
    \left[
       \begin{array}{cc}
         \Phi & \Psi \\
         \bar{\Psi} & \bar{\Phi} \\
       \end{array}
     \right]\left[
              \begin{array}{cc}
                0 & 1 \\
                -1 & 0 \\
              \end{array}
            \right]\left[
                     \begin{array}{cc}
                       \Phi^T & \Psi^\dag \\
                       \Psi^T & \Phi^\dag \\
                     \end{array}
                   \right]=\left[
              \begin{array}{cc}
                0 & 1 \\
                -1 & 0 \\
              \end{array}
            \right],
\end{equation}
or
\begin{equation}\label{canon_defn}
    \left[
       \begin{array}{cc}
         \Phi & \Psi \\
         \bar{\Psi} & \bar{\Phi} \\
       \end{array}
     \right]
     \left[
       \begin{array}{cc}
         \Phi^\dag & -\Psi^T \\
         -\Psi^\dag & \Phi^T \\
       \end{array}
     \right]=
     \left[
       \begin{array}{cc}
         \Phi^\dag & -\Psi^T \\
         -\Psi^\dag & \Phi^T \\
       \end{array}
     \right]
     \left[
       \begin{array}{cc}
         \Phi & \Psi \\
         \bar{\Psi} & \bar{\Phi} \\
       \end{array}
     \right]=
     \left[
       \begin{array}{cc}
         1 & 0 \\
         0 & 1 \\
       \end{array}
     \right].
\end{equation}

\begin{thm}\label{canon_unit_thm}
  The linear canonical transform \eqref{canon_trans}, \eqref{canon_defn} corresponds to the unitary transform
  \begin{equation}\label{unit_trans_2}
    \hat{b}_\al=\hat{U}\hat{a}_\al\hat{U}^\dag,\qquad \hat{b}^\dag_\al=\hat{U}\hat{a}^\dag_\al\hat{U}^\dag,
  \end{equation}
  if and only if
  \begin{enumerate}
    \item $\Psi$ is a Hilbert-Schmidt operator
    \begin{equation}
        \Sp(\Psi^\dag\Psi)<\infty;
    \end{equation}
    \item $f_\al$ belongs to the Hilbert space
\begin{equation}
    \sum_\al\bar{f}_\al f_\al<\infty.
\end{equation}
  \end{enumerate}
  In this case, the generating functional $\tilde{U}(\bar{a},a)$ for the matrix form of the operator $\hat{U}$ is
  \begin{equation}\label{unit_trans}
  \begin{split}
    \tilde{U}=&c\exp\Big\{\frac12\left[
                                  \begin{array}{cc}
                                    a & \bar{a} \\
                                  \end{array}
                                \right]
    \left[
                                             \begin{array}{cc}
                                               \bar{\Psi}\Phi^{-1} & (\Phi^{-1})^T \\
                                               \Phi^{-1} & -\Phi^{-1}\Psi \\
                                             \end{array}
                                           \right]\left[
                                                    \begin{array}{c}
                                                      a \\
                                                      \bar{a} \\
                                                    \end{array}
                                                  \right]+a(\bar{f}-\bar{\Psi}\Phi^{-1}f)-\bar{a}\Phi^{-1}f
     \Big\},\\
     c=&\frac{e^{i\vf}}{(\det\Phi\Phi^\dag)^{1/4}}\exp\Big\{\frac14\left[
                                                                     \begin{array}{cc}
                                                                       f & \bar{f} \\
                                                                     \end{array}
                                                                   \right]
                                                                   \left[
                                                                     \begin{array}{cc}
                                                                       (\Phi^{-1})^T\Psi^\dag & -1 \\
                                                                       -1 &  (\Phi^{-1})^\dag\Psi^T\\
                                                                     \end{array}
                                                                   \right]
                                                                   \left[
                                                                     \begin{array}{c}
                                                                       f \\
                                                                       \bar{f} \\
                                                                     \end{array}
                                                                   \right]
      \Big\},
  \end{split}
  \end{equation}
  where $\vf$ is an arbitrary real phase.
\end{thm}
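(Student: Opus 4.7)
The plan is to prove the theorem by exhibiting an explicit family of smooth time-slicings of the two-dimensional Minkowski spacetime on which the real part of $\Ga^{(1)}_{in-out}$ contains contributions that no polynomial counterterm can remove. The whole argument hinges on reducing the problem to a one-dimensional scattering calculation whose transition amplitude is known in closed form.

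First, I would construct the family of coordinate changes of Sect.~\ref{Chang_Coord}, parametrized by a smooth profile $\vf(t,x)$ satisfying \eqref{phi_class} and \eqref{extrem_restr}, so that the metric is identity at spatial infinity and for $|t|\geq T/2$, and its time variation is controlled by a slow scale $\tau_a$. The eigenvalue problem \eqref{KG_stat} then collapses to the Schr\"odinger equation \eqref{Schrod_eqn} with potential $m^2((1+\vf)^2-1)$. I would then justify, via the Landau--Zener estimate \eqref{Landau-Zener_est} together with the direct saddle-point bound \eqref{skew_approx} on $\{\dot\ups_\al,\ups_\be\}$, that the operator $A_{\al\be}$ appearing in \eqref{Smatr_energy_1} becomes negligible as $\tau_a\to\infty$, so that only the instantaneous zero-point energy piece \eqref{one_loop_st} survives and feeds into $\re\Ga^{(1)}_{in-out}$.

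Next, I would rewrite that piece, via the spectral-density identity \eqref{energy_zero_fin}, as a contour integral of $\partial_\la\ln t(\sqrt{\la})\sqrt{\la+m^2}$ after subtracting the leading asymptotics $c_1/\sqrt{\la}$ of $\ln t$, thereby isolating the Coleman--Weinberg divergences and the conformal anomaly from a genuinely finite remainder. The crux is to specialize to the P\"oschl--Teller family $V_0 = m^2 c(t)/\ch^2(b(t)x)$, whose $t(k)$ is given by ratios of gamma functions, and to use the Mellin--Barnes representation \eqref{psi_expan} of $\psi(1+x)$ to produce a contour integral in a complex variable $z$. Deforming that contour to the left and applying the asymptotics \eqref{hypergeom_asympt} of the hypergeometric function, I would show, as in \eqref{nonanalyt_apprx}, that the integrand \emph{grows} at $\re z\to-\infty$: the contour cannot be displaced to infinity, the asymptotic series in $\bar m^{-2}$ fails to converge, and a nonperturbative tail of order $e^{-km/b}$ is inevitably present. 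The nonvanishing of \eqref{integral3} after the $t$-integration for generic admissible $(b(t),c(t))$ then shows that this tail is not accidentally killed by the time integral.

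The last two steps are (i) controlling the replacement of the genuine potential $V$ by the exactly solvable $V_0$ using the method of comparison equations \eqref{appr_sol}--\eqref{sigma_x}, which I would show only modifies analytic-in-$\bar m^{-2}$ quantities (the conformal anomaly and the coefficient of $m^2$) and hence leaves the $e^{-km/b}$ terms intact; and (ii) invoking renormalization theory (App.~\ref{Reg_Renorm}) to conclude that any polynomial-in-derivatives-and-fields counterterm added to the classical action produces only contributions analytic in $m^{-2}$, and that, in the one-loop approximation, the graviton and ghost contributions are independent of $m$ and so cannot cancel these terms either. The main obstacle is the Mellin--Barnes analysis in step~(3): one must pin down the precise location of the extremum of the integrand, verify that the bound \eqref{nonanalyt_apprx} really \emph{obstructs} the shift of the contour, and show that the difference \eqref{integral3} of the $z$-dependent functions under the $t$-integral is generically nonzero on the admissible class \eqref{exampl_phi}. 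Everything else is either an adiabatic estimate or a controlled semiclassical correction.
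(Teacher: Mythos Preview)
Your proposal addresses the wrong statement. The theorem you were asked to prove is Theorem~\ref{canon_unit_thm}: the Shale--Berezin criterion that a linear Bogoliubov transformation \eqref{canon_trans} is unitarily implementable in Fock space if and only if $\Psi$ is Hilbert--Schmidt and $f$ is square-summable, together with the explicit Gaussian form \eqref{unit_trans} of the generating functional $\tilde U(\bar a,a)$. What you have written is instead a sketch of the paper's \emph{main physical result} (the theorem in the Conclusion about the foliation dependence of $\re\Ga^{(1)}_{in-out}$), which is a completely different statement with a completely different proof.

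A correct proof of Theorem~\ref{canon_unit_thm} has nothing to do with coordinate changes, adiabatic estimates, P\"oschl--Teller potentials, or Mellin--Barnes integrals. It is a purely functional-analytic fact about the bosonic Fock space. The paper itself does not prove it but defers to Berezin \cite{BerezMSQ1.4}. If you are to supply a proof, the standard route is: (i) for sufficiency, write the candidate $\hat U$ as a product of a Weyl displacement (implementing $f$), a second-quantized unitary (implementing the polar part of $\Phi$), and a squeezing operator $\exp(\tfrac12\hat a^\dag M\hat a^\dag-\text{h.c.})$ with $M=\Phi^{-1}\Psi$; the Hilbert--Schmidt condition on $\Psi$ is exactly what makes the squeezed vacuum normalizable and yields the Fredholm determinant $(\det\Phi\Phi^\dag)^{-1/4}$; (ii) for necessity, if $\hat U$ exists then $\hat U|0\rangle$ is annihilated by $\hat b_\al$, and computing $\|\hat a_\al\hat U|0\rangle\|^2$ and summing over $\al$ forces $\Sp\Psi^\dag\Psi<\infty$ and $\|f\|^2<\infty$; (iii) the explicit formula \eqref{unit_trans} follows by solving the first-order equations $\hat b_\al\hat U=\hat U\hat a_\al$ in the symbol calculus \eqref{rels_symb}. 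None of the machinery in your proposal is relevant here.
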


Notice that the Fredholm determinant appearing in \eqref{unit_trans} exists since $(\Phi\Phi^\dag-1)$ is a trace-class operator.

Let the Hamilton operator of a system be reduced to the normal form
\begin{equation}\label{Hamil_gener}
    \hat{H}=\frac12\big[2\hat{a}^\dag_\al C_{\al\be}(t)\hat{a}_\al+\hat{a}_\al \bar{A}_{\al\be}(t) \hat{a}_\be+\hat{a}^\dag_\al A_{\al\be}(t) \hat{a}^\dag_\be\big]+\hat{a}^\dag_\al f_\al(t)+\bar{f}_\al(t)\hat{a}_\al,
\end{equation}
where $C(t)=C^\dag(t)$ is a self-adjoint operator, $A(t)=A^T(t)$ is a symmetric Hilbert-Schmidt operator, and $f_\al(t)$ belongs to the Hilbert space.

\begin{thm}\label{evol_symb_thm}
  Suppose that the Hamilton operator has the form \eqref{Hamil_gener} and the following conditions hold:
  \begin{enumerate}
    \item There exists the unitary operator\footnote{The sufficient conditions for the existence of such an operator can be found in \cite{Howland} and \cite{ReedSimon2}, Sect. 13.}
    \begin{equation}\label{V_oper}
        V(t)=\Texp\Big[-i\int_0^t d\tau \bar{C}(\tau)\Big];
    \end{equation}
    \item The operator
    \begin{equation}
        F(t):=\int_0^td\tau V^T(\tau)A(\tau)V(\tau)
    \end{equation}
    is Hilbert-Schmidt and the function
    \begin{equation}
        \Sp[F(t)F^\dag(t)]^{1/2}
    \end{equation}
    is locally summable;
    \item The operator $F(t)\bar{A}(t)$ is trace-class and the function
    \begin{equation}
        \Sp\big\{F(t)\bar{A}(t)[F(t)\bar{A}(t)]^\dag\big\}^{1/2}
    \end{equation}
    is locally summable.
  \end{enumerate}
  Then the generating functional for the matrix form of the unitary evolution operator $\hat{U}_{t,0}$ is
  \begin{equation}\label{evol_symb}
    \tilde{U}_{t,0}(\bar{a},a)=c(t)\exp\Big\{\frac12\left[
                                                      \begin{array}{cc}
                                                        \bar{a} & a \\
                                                      \end{array}
                                                    \right]
                                                    \left[
                                                      \begin{array}{cc}
                                                        \Psi\bar{\Phi}^{-1} & (\Phi^{-1})^\dag \\
                                                        \bar{\Phi}^{-1} & -\bar{\Phi}^{-1}\bar{\Psi} \\
                                                      \end{array}
                                                    \right]
                                                    \left[
                                                      \begin{array}{c}
                                                        \bar{a} \\
                                                        a \\
                                                      \end{array}
                                                    \right]
                                                    +\bar{a}(g-\Psi\bar{\Phi}^{-1}\bar{g})-a\bar{\Phi}^{-1}\bar{g}
    \Big\},
  \end{equation}
  where
  \begin{equation}\label{gandc}
  \begin{split}
    \left[
      \begin{array}{c}
        g \\
        \bar{g} \\
      \end{array}
    \right]&=-i\int_0^t d\tau\Texp\Big\{-i\int_\tau^tds\left[
                                           \begin{array}{cc}
                                             C(s) & A(s) \\
                                             -\bar{A}(s) & -\bar{C}(s) \\
                                           \end{array}
                                         \right]\Big\}
                                         \left[
                                           \begin{array}{c}
                                             f(\tau) \\
                                             -\bar{f}(\tau) \\
                                           \end{array}
                                         \right],\\
    c(t)&=\big[\det\bar{\Phi}(t)V(t)\big]^{-1/2}\exp\Big\{-i\int_0^td\tau\Big[\frac12(g-\bar{g}(\Phi^{-1})^\dag\Psi^T)\bar{A}(g-\Psi\bar{\Phi}^{-1}\bar{g})+\bar{f}(g-\Psi\bar{\Phi}^{-1}\bar{g})\Big] \Big\},
  \end{split}
  \end{equation}
and also
\begin{equation}\label{unit_canon}
    \left[
      \begin{array}{cc}
        \Phi(t) & \Psi(t) \\
        \bar{\Psi}(t) & \bar{\Phi}(t) \\
      \end{array}
    \right]
    =\Texp\Big\{-i\int_0^t d\tau \left[
                                           \begin{array}{cc}
                                             C(\tau) & A(\tau) \\
                                             -\bar{A}(\tau) & -\bar{C}(\tau) \\
                                           \end{array}
                                         \right]\Big\}.
\end{equation}
\end{thm}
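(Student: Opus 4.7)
The plan is to view $\hat U_{t,0}$ as the Fock-space implementer of an inhomogeneous linear canonical transformation and then reduce the theorem to an application of Theorem~\ref{canon_unit_thm}. Because $\hat H(t)$ in \eqref{Hamil_gener} is at most quadratic in $\hat a,\hat a^\dag$, the Heisenberg operators $\hat a_\al(t):=\hat U^\dag_{t,0}\hat a_\al\hat U_{t,0}$ and $\hat a^\dag_\al(t)$ satisfy the closed linear inhomogeneous system
\begin{equation*}
    i\frac{d}{dt}\begin{pmatrix}\hat a(t)\\ \hat a^\dag(t)\end{pmatrix} =\begin{pmatrix}C(t)&A(t)\\-\bar A(t)&-\bar C(t)\end{pmatrix}\begin{pmatrix}\hat a(t)\\ \hat a^\dag(t)\end{pmatrix} +\begin{pmatrix}f(t)\\-\bar f(t)\end{pmatrix},\qquad \hat a(0)=\hat a,
\end{equation*}
whose solution by Duhamel's principle yields exactly the time-ordered exponentials $\Phi,\Psi,\bar\Psi,\bar\Phi$ of \eqref{unit_canon} and the translation $g,\bar g$ of \eqref{gandc}. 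Preservation of the canonical commutation relations under this evolution forces the symplectic constraint \eqref{canon_defn}, which holds algebraically from the block structure of the generator.

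Next I would verify the hypotheses of Theorem~\ref{canon_unit_thm} for this canonical transformation. To control $\Psi(t)$ I would expand \eqref{unit_canon} as a Dyson series in the interaction picture with respect to the block-diagonal part $\mathrm{diag}(C,-\bar C)$: condition~1 produces the dressing unitary $V$, which drops out of Hilbert-Schmidt and trace norms under conjugation; condition~2 controls the leading iterate in Hilbert-Schmidt norm directly by $\Sp[FF^\dag]^{1/2}$; and condition~3 controls the next iterate. The local summability of these trace norms then closes a Gronwall-type bootstrap over all higher orders. An analogous majorization, using $f\in\ell^2$, places $g(t)$ in the one-particle Hilbert space. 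Theorem~\ref{canon_unit_thm} then applies and furnishes a unitary $\hat U_{t,0}$ whose matrix-form generating functional is Gaussian with the block structure of \eqref{unit_trans}; after accounting for the convention $\hat a(t)=\hat U^\dag_{t,0}\hat a\hat U_{t,0}$ (which swaps $(\bar a,a)$ and replaces each block of the quadratic form by its complex-conjugate transpose, so that $(\Phi^{-1})^T$ becomes $(\Phi^{-1})^\dag$, and so on), the exponent in \eqref{evol_symb} is reproduced up to the free phase left undetermined by Theorem~\ref{canon_unit_thm}.

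To pin down the prefactor $c(t)$ I would differentiate the Gaussian ansatz in $t$ and require it to satisfy the symbol-form Schr\"odinger equation obtained from $i\partial_t\hat U_{t,0}=\hat H(t)\hat U_{t,0}$ by means of the correspondences \eqref{rels_symb}. Matching coefficients of the quadratic and linear monomials in $(\bar a,a)$ reproduces the ODEs for $\Phi,\Psi,g$ already solved in the first step; matching the constant term yields a scalar ODE for $c(t)$ whose ``trace part'' integrates, by the Jacobi formula $\partial_t\ln\det(\bar\Phi V)=\Sp[(\bar\Phi V)^{-1}\partial_t(\bar\Phi V)]$ together with the ODEs for $\bar\Phi$ and $V$, to the determinant prefactor $[\det\bar\Phi(t)V(t)]^{-1/2}$, while the contributions linear and quadratic in $f$ integrate to the remaining exponent in \eqref{gandc}. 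The initial condition $c(0)=1$ is forced by $\tilde U_{0,0}(\bar a,a)=e^{\bar a a}$, the symbol of the identity, which also selects the branch of the square root.

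The main obstacle is the implementability step. By Shale's theorem, Hilbert-Schmidt-ness of the off-diagonal block of a Bogoliubov transformation is both necessary and sufficient for Fock-space implementability, so nothing weaker than conditions~2,~3 can suffice, and the two local summability requirements are precisely the data needed to iterate the first- and second-order Hilbert-Schmidt bounds to all orders of the Dyson expansion via a Gronwall inequality rather than a merely perturbative estimate. A secondary subtlety is that the complex phase of $[\det\bar\Phi V]^{-1/2}$, which \emph{is} fixed by the Schr\"odinger equation, must be shown to agree with the free phase $e^{i\vf}$ left open in Theorem~\ref{canon_unit_thm}; this is automatic once continuity in $t$ and the normalization $c(0)=1$ are imposed.
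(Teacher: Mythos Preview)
Your approach is essentially the same as the paper's. Both view the evolution as an inhomogeneous Bogoliubov transformation obtained by solving the linear Heisenberg equations, invoke Theorem~\ref{canon_unit_thm} to get the Gaussian generating functional up to a phase (the paper also notes the complex conjugation needed to pass from \eqref{unit_trans_2} to $\hat a(t)=\hat U_{0,t}\hat a\,\hat U_{t,0}$), and then fix $c(t)$ from the symbol-form Schr\"odinger equation at $a=\bar a=0$, arriving at the ODE
\[
    i\dot c=\Big[\tfrac12\Sp(\Psi\bar\Phi^{-1}\bar A)+\tfrac12(g-\bar g(\Phi^{-1})^\dag\Psi^T)\bar A(g-\Psi\bar\Phi^{-1}\bar g)+\bar f(g-\Psi\bar\Phi^{-1}\bar g)\Big]c,\quad c(0)=1.
\]
The one place you go beyond the paper is the implementability step: the paper explicitly declines to reprove the analytical part (``we shall not present here a rigorous proof \ldots\ it is given in \cite{BerezMSQ1.4}. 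The conditions of the theorem guarantee the existence of the evolution operator, of the operator \eqref{unit_canon}, and of the Fredholm determinant in $c(t)$''), whereas you sketch how conditions 1--3 feed a Dyson/Gronwall argument controlling the Hilbert--Schmidt norm of $\Psi(t)$. That is a useful elaboration, not a different route.
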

\begin{proof}
  We shall not present here a rigorous proof of this theorem. It is given in \cite{BerezMSQ1.4}. We provide only the formal calculations showing the correctness of formula \eqref{evol_symb}. The conditions of the theorem guarantee the existence of the evolution operator, of the operator \eqref{unit_canon}, and of the Fredholm determinant in $c(t)$.

  Let us introduce the creation-annihilation operators related to $(\hat{a}_\al(0),\hat{a}^\dag_\al(0))$ by a unitary transform,
  \begin{equation}\label{unit_trans_3}
    \hat{a}_\al(t):=\hat{U}_{0,t}\hat{a}_\al(0)\hat{U}_{t,0},\qquad \hat{a}^\dag_\al(t):=\hat{U}_{0,t}\hat{a}^\dag_\al(0)\hat{U}_{t,0}.
  \end{equation}
  These operators obey the Heisenberg equations with the Hamilton operator \eqref{Hamil_gener}
  \begin{equation}
    i\left[
       \begin{array}{c}
         \dot{\hat{a}}(t) \\
         \dot{\hat{a}}^\dag(t) \\
       \end{array}
     \right]=
     \left[
       \begin{array}{cc}
         C(t) & A(t) \\
         -\bar{A}(t) & -\bar{C}(t) \\
       \end{array}
     \right]
     \left[
       \begin{array}{c}
         \hat{a}(t) \\
         \hat{a}^\dag(t) \\
       \end{array}
     \right]+
     \left[
       \begin{array}{c}
         f(t) \\
         -\bar{f}(t) \\
       \end{array}
     \right].
  \end{equation}
  The solution to these equations is written as
  \begin{equation}
    \left[
      \begin{array}{c}
        \hat{a}(t) \\
        \hat{a}^\dag(t) \\
      \end{array}
    \right]=
    \left[
      \begin{array}{cc}
        \Phi(t) & \Psi(t) \\
        \bar{\Psi}(t) & \bar{\Phi}(t) \\
      \end{array}
    \right]
    \left[
      \begin{array}{c}
        \hat{a}(0) \\
        \hat{a}^\dag(0) \\
      \end{array}
    \right]
    +\left[
      \begin{array}{c}
        g \\
        \bar{g} \\
      \end{array}
    \right],
  \end{equation}
  where the expressions on the right-hand side are given in formulas \eqref{gandc}, \eqref{unit_canon}. Since the creation-annihilation operators $(\hat{a}(t),\hat{a}^\dag(t))$ and $(\hat{a}(0),\hat{a}^\dag(0))$ are related by a unitary transform, we can employ the theorem \ref{canon_unit_thm} to construct the explicit expression for the generating functional of the matrix form of the evolution operator (up to a phase). Comparing \eqref{unit_trans_2} with \eqref{unit_trans_3}, we obtain the formula \eqref{evol_symb} with the factor $c(t)$ independent from $(\bar{a},a)$. This factor is to be found directly from the Schr\"{o}dinger equation for the evolution operator. To deduce \eqref{evol_symb} from \eqref{unit_trans}, it is necessary to remember that for the Hermitian conjugation of an operator there corresponds the complex conjugation of the generating functional of its matrix form.

  Taking into account the relations \eqref{rels_symb}, we obtain for the Schr\"{o}dinger equation
  \begin{equation}
    i\partial_t\tilde{U}_{t,0}=\bigg[\frac12\Big(2\bar{a}C(t)\frac{\de}{\de\bar{a}}+\bar{a}A(t)\bar{a}+\frac{\de}{\de\bar{a}}\bar{A}(t)\frac{\de}{\de\bar{a}}\Big) +f(t)\bar{a}+\bar{f}(t)\frac{\de}{\de\bar{a}} \bigg]\tilde{U}_{t,0}.
  \end{equation}
  Substituting the complex conjugate expression \eqref{unit_trans} into this equation and putting $a=\bar{a}=0$, we arrive at
  \begin{equation}
    i\dot{c}=\Big[\frac12\Sp(\Psi\bar{\Phi}^{-1}\bar{A})+\frac12(g-\bar{g}(\Phi^{-1})^\dag\Psi^T)\bar{A}(g-\Psi\bar{\Phi}^{-1}\bar{g}) +\bar{f}(g-\Psi\bar{\Phi}^{-1}\bar{g})\Big],\qquad c(0)=1.
  \end{equation}
  It is not difficult to check that $c(t)$ presented in \eqref{gandc} is the solution to this equation.
\end{proof}

The matrix element of the evolution operator,
\begin{equation}\label{vac_vac}
    \lan0|\hat{U}_{t,0}|0\ran=U_{t,0}(0,0)=\tilde{U}_{t,0}(0,0)=c(t),
\end{equation}
gives the vacuum-vacuum transition amplitude for a system with the Hamiltonian \eqref{Hamil_gener}. If $f_\al(t)=0$ then
\begin{equation}\label{vac_vac_expl}
    c(t)=\big[\det\bar{\Phi}(t)V(t)\big]^{-1/2}.
\end{equation}
Under the assumption that the operator $A$ is small in a certain sense, one can develop the perturbation theory to calculate the Fredholm determinant considered. At first, we observe that for $A=0$ this determinant is equal to unity. Let
\begin{equation}
    h=\left[
       \begin{array}{cc}
         C(t) & A(t) \\
         -\bar{A}(t) & -\bar{C}(t) \\
       \end{array}
     \right]=\left[
       \begin{array}{cc}
         C(t) & 0 \\
         0 & -\bar{C}(t) \\
       \end{array}
     \right]+
     \left[
       \begin{array}{cc}
         0 & A(t) \\
         -\bar{A}(t) & 0 \\
       \end{array}
     \right]=h_0(t)+v(t),
\end{equation}
where $h_0(t)$ is the first matrix and $v(t)$ is the second one. Let us introduce the operator
\begin{equation}
    S_{t,0}:=U^0_{0,t}U_{t,0},\qquad U^0_{t_2,t_1}=:
    \left[
      \begin{array}{cc}
        R_{t_2,t_1} & 0 \\
        0 & \bar{R}_{t_2,t_1} \\
      \end{array}
    \right],
\end{equation}
where $U^0_{t,0}$ is the evolution operator associated with $h_0(t)$ and $U_{t,0}$ is the evolution operator constructed with the aid of $h(t)$. There is the standard representation for the operator $S_{t,0}$:
\begin{equation}
    S_{t,0}=\Texp\Big[-i\int_0^t d\tau v_I(\tau)\Big],\qquad v_I(t)=U^0_{0,t}v(t)U^0_{t,0}.
\end{equation}
The determinant of the block $(1,1)$ of the matrix $S_{t,0}$ is the determinant entering into \eqref{gandc}. As a result, we obtain
\begin{equation}
    \ln\det[\Phi(t)\bar{V}(t)]=\Sp\ln\Big[1+\sum_{n=1}^\infty\frac{(-i)^n}{n!}\int_0^t d\tau_1\cdots d\tau_n P\, T\{v_I(\tau_1)\cdots v_I(\tau_n)\}P\Big],\quad P=\left[
                                                                                                                                                                 \begin{array}{cc}
                                                                                                                                                                   1 & 0 \\
                                                                                                                                                                   0 & 0 \\
                                                                                                                                                                 \end{array}
                                                                                                                                                               \right],
\end{equation}
whence
\begin{equation}\label{1_corr_nonstat}
    \ln\det[\Phi(t)\bar{V}(t)]=\frac12\int_0^t d\tau_1d\tau_2(\re+i\sgn(\tau_1-\tau_2)\im)\Sp\big[R_{\tau_2,\tau_1}A(\tau_1)\bar{R}_{\tau_1,\tau_2}\bar{A}(\tau_2)\big]+\cdots,
\end{equation}
in the leading order in $A(\tau)$. If $A(\tau)$ is Hilbert-Schmidt then the operator under the trace sign is trace-class.

For \eqref{Smatr_energy_1} we have
\begin{equation}\label{CA_energy}
    C_{\al\be}=\omega_\al\de_{\al\be}+\{\dot{\bar{\ups}}_\al,\ups_\be\},\qquad A_{\al\be}=-\{\dot{\bar{\ups}}_\al,\bar{\ups}_\be\}.
\end{equation}
It is not difficult to verify that the generating functional for the matrix form of the evolution operator \eqref{evol_symb} is invariant under the replacement
\begin{equation}\label{gauge_transf}
\begin{gathered}
    \ups_\al(\tau)\rightarrow e^{i\vf_\al(\tau)}\ups_\al(\tau),\quad\bar{\ups}_\al(\tau)\rightarrow e^{-i\vf_\al(\tau)}\bar{\ups}_\al(\tau),\quad f_\al(\tau)\rightarrow e^{-i\vf_\al(\tau)}f_\al(\tau),\quad \bar{f}_\al(\tau)\rightarrow e^{i\vf_\al(\tau)}\bar{f}_\al(\tau),\\
    a_\al\rightarrow e^{-i\vf_\al(0)}a_\al, \quad \bar{a}_\al\rightarrow e^{i\vf_\al(t)}\bar{a}_\al.
\end{gathered}
\end{equation}
In particular, $c(t)$ is not changed under the transformations written in the first line of \eqref{gauge_transf}. So the one-loop effective action \eqref{inout_eff_act} is invariant under these transformations.

\section{Density of states and the $S$-matrix}\label{Dens_Stat}

Let us give in this appendix the formal derivation of formula \eqref{spectr_denst_dif} relating the change of the spectral density of a system due to an external field to the $S$-matrix (see, e.g., \cite{Dew_scat,Bordag_scat}).

Let the Hamiltonian $H$ of the system be represented in the form
\begin{equation}
    H=H_0+U.
\end{equation}
Consider the Green functions
\begin{equation}
  G^{-1}(\la)=\la-H,\qquad G^{-1}_0(\la)=\la-H_0,\quad\la\in \mathbb{C}.
\end{equation}
Then, formally, the spectral density with respect to the parameter $\la\in\mathbb{R}$:
\begin{equation}\label{spectr_denst}
  \rho(\la)=\sum_k\de(\la-\la_k)=\Sp\de(\la-H),\qquad \la_k\in\mathrm{spec}\,H,
\end{equation}
where the summation over the discrete spectrum and the integration over the continuous spectrum with an appropriate measure are assumed. The change of the spectral density is written as \cite{Dew_scat}
\begin{equation}
\begin{split}
    \De\rho(\la):=&\sum_k\de(\la-\la_k)-\sum_k\de(\la-\la^0_k)=\Sp[\de(\la-H)-\de(\la-H_0)]=\\
    =&\frac{i}{2\pi}\Sp[(\la-H+i0)^{-1}-(\la-H-i0)^{-1}-(\la-H_0+i0)^{-1}+(\la-H_0-i0)^{-1}]\\
    =&:\frac{i}{2\pi}\Sp[G^+(\la)-G^-(\la)-G_0^+(\la)+G^-_0(\la)].
\end{split}
\end{equation}
Taking into account that, formally,
\begin{equation}
    \Sp G(\la)=-\partial_\la\ln\det G(\la),
\end{equation}
we have
\begin{equation}\label{sectr_dens_ch}
    \De\rho(\la)=-\frac{i}{2\pi}\partial_\la\ln\det\Big\{G^-_0(\la)\big(G^-(\la)\big)^{-1}\Big[G^+_0(\la)\big(G^+(\la)\big)^{-1}\Big]^{-1}\Big\}.
\end{equation}
It is clear that
\begin{equation}
    G_0(\la)G^{-1}(\la)=1-G_0(\la)U.
\end{equation}

Let us define the $T$-matrix as
\begin{equation}
    T(\la)G_0(\la)=UG(\la)\;\Rightarrow\;T(\la)=U(1-G_0(\la)U)^{-1},\qquad (1-G_0(\la)U)^{-1}=1+G_0(\la)T(\la).
\end{equation}
Consequently, we can write
\begin{multline}
    G^-_0(\la)\big(G^-(\la)\big)^{-1}\Big[G^+_0(\la)\big(G^+(\la)\big)^{-1}\Big]^{-1}=[1-G_0^-(\la)U][1-G_0^+(\la)U]^{-1}=\\
    =[1-G_0^+(\la)U]^{-1}-G_0^-(\la)U[1-G_0^+(\la)U]^{-1}
    =1+(G_0^+(\la)-G_0^-(\la))T(\la)=\\
    =1-2\pi i\de(\la-H_0)T(\la)=S(\la),
\end{multline}
where $S(\la)$ is the $S$-matrix off the energy-shell. Then we have
\begin{multline}
    \det\Big\{G^-_0(\la)\big(G^-(\la)\big)^{-1}\Big[G^+_0(\la)\big(G^+(\la)\big)^{-1}\Big]^{-1}\Big\}=\det[\de_{\la''\la'}-2\pi i\de(\la-\la'')T_{\la''\la'}(\la)]=\\
    =\exp\Sp\ln[\de_{\la''\la'}-2\pi i\de(\la-\la'')T_{\la''\la'}(\la)]=1-2\pi iT_{\la\la}(\la)=s(\la),
\end{multline}
where $s(\la)$ is the $S$-matrix on the energy-shell. If the energy level $\la$ is degenerate then the determinant of the matrix $s(\la)$ corresponding to the energy $\la$ stands on the right-hand side of the last equality.

As a result, we obtain
\begin{equation}\label{spec_dens}
    \De\rho(\la)=-\frac{i}{2\pi}\partial_\la\ln s(\la),\quad \la\in \mathbb{R}.
\end{equation}
Notice that $s(\la)$ is not defined in the points of a discrete spectrum where the $S$-matrix possesses poles, and formula \eqref{spec_dens} is to be extended to these points. It follows from the spectral density definition and \eqref{spec_dens} that the function $\ln s(\la)$ should possess the discontinuous jumps equal to $2\pi i$ in the points of the discrete spectrum.

Consider the scattering problem for a one-dimensional Schr\"{o}dinger equation on the whole real axis (see for details, e.g., \cite{Faddeev,FaddZakh,ZMNP}):
\begin{equation}\label{Schr_eq_d1}
    -\psi''+U(x)\psi=\la\psi,\quad x\in(-\infty,+\infty),
\end{equation}
where $U(x)$ is a real continuous function satisfying
\begin{equation}\label{potential_cond}
    \int_{-\infty}^\infty dx(1+|x|)|U(x)|<\infty.
\end{equation}
The solutions of the Schr\"{o}dinger equation \eqref{Schr_eq_d1} corresponding to the same energy $\la=k^2$ and describing the scattering of waves with the momenta $k$ and $-k$,
\begin{equation}\label{psi12}
\begin{alignedat}{2}
    \psi_1(k,x)&\underset{x\rightarrow-\infty}{\rightarrow}e^{ikx}+s_{12}(k)e^{-ikx},&\qquad\psi_1(k,x)&\underset{x\rightarrow+\infty}{\rightarrow}s_{11}(k)e^{ikx},\\
    \psi_2(k,x)&\underset{x\rightarrow-\infty}{\rightarrow}s_{22}(k)e^{-ikx},&\qquad\psi_2(k,x)&\underset{x\rightarrow+\infty}{\rightarrow}e^{-ikx}+s_{21}(k)e^{ikx},
\end{alignedat}
\end{equation}
determine the $S$-matrix
\begin{equation}
    s(\la)=\left[
             \begin{array}{cc}
               s_{11}(k) & s_{12}(k) \\
               s_{21}(k) & s_{22}(k) \\
             \end{array}
           \right].
\end{equation}
It is useful to expand the solutions $\psi_{1,2}$ in the Jost solutions
\begin{equation}\label{psi12_asympt}
\begin{split}
    \psi_1(k,x)&=f^-(k,x)+s_{12}(k)f^-(-k,x)=s_{11}(k)f^+(k,x),\\
    \psi_2(k,x)&=s_{22}(k)f^-(-k,x)=f^+(-k,x)+s_{21}(k)f^+(k,x).
\end{split}
\end{equation}
Whence we deduce the expressions for the elements of the $S$-matrix in terms of the Wronskians
\begin{equation}\label{s_matr_epl}
\begin{alignedat}{2}
    s_{11}(k)&=\frac{W[f^-(-k),f^-(k)]}{W[f^-(-k),f^+(k)]},&\qquad s_{22}(k)&=\frac{W[f^+(k),f^+(-k)]}{W[f^+(k),f^-(-k)]},\\
    s_{12}(k)&=-\frac{W[f^+(k),f^-(k)]}{W[f^+(k),f^-(-k)]},&\qquad s_{21}(k)&=-\frac{W[f^+(-k),f^-(-k)]}{W[f^+(k),f^-(-k)]},
\end{alignedat}
\end{equation}
where for brevity, the argument $x$ of the Jost solutions is omitted. Taking into account that
\begin{equation}\label{wronsk_rel}
    W[f^-(k),f^-(-k)]=2ik,\qquad W[f^+(k),f^+(-k)]=2ik,
\end{equation}
we have
\begin{equation}
    s_{11}(k)=s_{22}(k)=:t(k).
\end{equation}
From \eqref{wronsk_rel} we see that the Wronskian defines a linear symplectic structure on the two-dimensional space of solutions with the bases $\{f^+(k),f^+(-k)\}$ and $\{f^-(k),f^-(-k)\}$:
\begin{equation}\label{simplect_trans}
\begin{split}
    f^+(k,x)=a(k)f^-(k,x)+b(k)f^-(-k,x),\qquad f^+(-k,x)=b(-k)f^-(k,x)+a(-k)f^-(-k,x).
\end{split}
\end{equation}
It follows from \eqref{s_matr_epl} that
\begin{equation}
    a(k)=s_{11}^{-1}(k),\qquad b(k)=s_{12}(k)s_{11}^{-1}(k).
\end{equation}
The conditions of symplecticity of the transform \eqref{simplect_trans} are equivalent to
\begin{equation}\label{untarity}
    t(k)t(-k)+r_-(k)r_-(-k)=1,\qquad t(k)t(-k)+r_+(k)r_+(-k)=1,\qquad t(k)r_+(-k)+r_-(k)t(-k)=0,
\end{equation}
where the notation has been introduced
\begin{equation}
    s(\la)=:\left[
             \begin{array}{cc}
               t(k) & r_-(k) \\
               r_+(k) & t(k) \\
             \end{array}
           \right].
\end{equation}
Besides, it ensues from \eqref{s_matr_epl} that
\begin{equation}\label{unitarity_2}
    \bar{t}(k)=t(-\bar{k}),\qquad \bar{r}_+(k)=r_+(-\bar{k}),\qquad \bar{r}_-(k)=r_-(-\bar{k}).
\end{equation}
For real $k$, the relations \eqref{untarity}, \eqref{unitarity_2} express the unitary of the $S$-matrix. Using these relations, we obtain
\begin{equation}
    \det s(\la)=\frac{t(k)}{t(-k)}=\frac{t(k)}{\bar{t}(\bar{k})}\underset{\la'\in \mathbb{R}}{=}\frac{t(\sqrt{\la_+})}{t(\sqrt{\la_-})},
\end{equation}
where $\la_\pm=\la'\pm i0$, and the cut of the square root is chosen to be along the real positive semiaxis.

As known \cite{Faddeev}, the transition amplitude $t(\sqrt{\la})$ possesses simple poles at $\la<0$ corresponding to the points of a discrete spectrum. It is easy to verify that
\begin{equation}
    \ln\frac{t(\sqrt{\la_+})}{t(\sqrt{\la_-})}
\end{equation}
has the jumps equal to $2\pi i$ in these points and, evidently, coincides with $\ln\det s(\la)$ out of them. So the change of the spectral density \eqref{spec_dens} can be cast into the form \cite{Bordag_scat}
\begin{equation}\label{spectr_denst_dif}
    \De\rho(\la)=-\frac{i}{2\pi}\partial_\la\ln\frac{t(\sqrt{\la_+})}{t(\sqrt{\la_-})}.
\end{equation}
Let $\la_0=\inf\mathrm{supp}\,\De\rho(\la)$, $\la\in \mathbb{R}$. Then, for $\la\not\in[\la_0,+\infty)$, we deduce
\begin{multline}
    \ln\det[G_0(\la)G^{-1}(\la)]=\Sp[\ln(\la-H)-\ln(\la-H_0)]=\int_{\la_0}^\infty d\e\De\rho(\e)\ln(\la-\e)=\\
    =\int_{\la_0}^\infty \frac{d\e}{2\pi i}\partial_\e\ln\frac{t(\sqrt{\e_+})}{t(\sqrt{\e_-})}\ln(\la-\e) =\frac{i}{2\pi}\int_C d\e\partial_\e\ln t(\sqrt{\e})\ln(\la-\e)=-\int_{-\infty}^\la d\e\partial_\e\ln t(\sqrt{\e})=\ln t^{-1}(\sqrt{\la}),
\end{multline}
where the contour $C$ passes from $+\infty$ a little bit higher than the real axis, encircles counter-clockwise the point $\la_0$, and then goes to $+\infty$ a little bit lower than the real axis. The branch of the function $\ln(\la-\e)$ is taken with the cut along the real positive semiaxis. In the penultimate equality, the contour $C$ is deformed to the contour encompassing the branch cut of $\ln(\la-\e)$. At that, it is taken into account that $\ln t(\sqrt\la)$ is holomorphic on the physical sheet out of the ray $[\la_0,+\infty)$ and \cite{Faddeev}
\begin{equation}\label{t_asympt}
    t(\sqrt{\la})\underset{|\la|\rightarrow\infty}{\rightarrow}1+O(1/\sqrt{\la}).
\end{equation}
As a result, we have
\begin{equation}
    \det[G_0(\la)G^{-1}(\la)]=t^{-1}(\sqrt{\la}).
\end{equation}
The expression on the left-hand side is the Fredholm determinant \cite{Newton_scat.12} (cf. \eqref{sectr_dens_ch} and \eqref{spectr_denst_dif}).

\section{Analytical properties of the adiabatic invariant}\label{Adiabatic_Invariant}

Let us study some analytical properties of the adiabatic invariant,
\begin{equation}\label{adiab_inv}
    I(\la)=2\int_{x_1(\la)}^{x_2(\la)}dx\sqrt{\la-V(x)},\qquad V(x_{1,2}(\la))=\la,\quad V'(x_{1,2}(\la))\neq0,
\end{equation}
as a function of the complex variable $\la$, the principal branch of the square root being taken in \eqref{adiab_inv}. Suppose that the function $V(x)$ is analytic in the vicinity of the real axis and possesses only pole singularities in the complex plane. Assume that, at infinity, the potential behaves as
\begin{equation}\label{potent_asympt}
    V(x)\rightarrow c_+\;\;\text{for}\;\;|x|\rightarrow\infty, \;\re x>0;\qquad V(x)\rightarrow c_-\;\;\text{for}\;\;|x|\rightarrow\infty, \;\re x<0,
\end{equation}
and there is the interval of values of $\la\in \mathbb{R}$ such that
\begin{equation}
    \la-V(x)>0,\quad\forall x\in(x_1(\la),x_2(\la)).
\end{equation}
Then $I(\la)$ can be continued analytically to the complex plane so that
\begin{equation}
    \bar{I}(\la)=I(\bar{\la}).
\end{equation}
The turning points $x_{1,2}(\la)$ also obey the Schwarz symmetry principle
\begin{equation}\label{roots_schw_p}
    \bar{x}_{1,2}(\la)=x_{1,2}(\bar{\la}).
\end{equation}

The singularities of the function $I(\la)$ appear only in the case when
\renewcommand{\theenumi}{\alph{enumi}}
\renewcommand{\labelenumi}{\theenumi)}
\begin{enumerate}
  \item $x_{1}(\la)$ and/or $x_{2}(\la)$ go to infinity;
  \item The turning points different from $x_{1,2}(\la)$ pitch the integration contour connecting $x_{1}(\la)$ with $x_{2}(\la)$;
  \item The turning points different from $x_{1,2}(\la)$ approach one (or both) the points $x_{1,2}(\la)$.
\end{enumerate}
\renewcommand{\theenumi}{\arabic{enumi}}
\renewcommand{\labelenumi}{\theenumi.}
At the same time, when the turning points $x_{1,2}(\la)$ coalesce, the function $I(\la)$ does not have a singularity. This is easy to see if one writes \eqref{adiab_inv} as the integral over the contour enclosing the branch points $x_{1,2}(\la)$ of the integrand. The cut of the square root is then taken to be some curve connecting $x_1(\la)$ with $x_2(\la)$. The singularities of $I(\la)$ of the types $b)$ and $c)$ occur only when this contour is pitched.

Let us find the values of $\la$ where these singularities arise and determine the behaviour of the function $I(\la)$ near these points. It follows from the asymptotics \eqref{potent_asympt} that the case $a)$ is realized at $\la=c_{\pm}$. We consider the behaviour of $I(\la)$ only in the neighbourhood of the point $c_+$. As for the point $\la=c_-$, the consideration is completely analogous. We restrict our consideration to two the most frequent types of behaviour of the potential at infinity
\begin{equation}\label{potent_asympt_epl}
    i)\;V(x)\underset{x\rightarrow+\infty}{\rightarrow} c_++a_+e^{-\be_+ x},\qquad ii)\;V(x)\underset{x\rightarrow+\infty}{\rightarrow} c_++a_+x^{-N_+},\;N_+>2.
\end{equation}
Let the turning point $x_2(\la)\rightarrow+\infty$ at $\la\rightarrow c_+$, and $x_1(\la)$ has a finite limit in this case. Then we can write for a period
\begin{equation}\label{period}
    \tau(\la):=I'(\la)=\int_{x_1(\la)}^{M}\frac{dx}{\sqrt{\la-V(x)}}+\int_{M}^{x_2(\la)}\frac{dx}{\sqrt{\la-V(x)}},\quad M\in(x_1,x_2),
\end{equation}
where the point $M$ is chosen sufficiently close to the point $x_2$ such that the potential $V(x)$ is well approximated by the asymptotics \eqref{potent_asympt_epl} on the interval $[M,x_2]$. We are interested only in the leading asymptotics of $\tau(\la)$ singular at $\la\rightarrow c_+$. The first integral is finite at $\la\rightarrow c_+$. In order to evaluate the second integral, we use the asymptotics \eqref{potent_asympt_epl}.

In the case $i)$, we pass to the integration variable $V$. Then the second integral is written as
\begin{equation}
    \int_{V(M)}^\la\frac{dV}{\be_+(c_+-V)\sqrt{\la-V}}=\frac{2}{\be_+\sqrt{c_+-\la}}\arctg\sqrt{\frac{\la-V(M)}{c_+-\la}}\underset{\la\rightarrow c_+}{\rightarrow}\frac{\pi}{\be_+\sqrt{c_+-\la}}+o\big((c_+-\la)^{-1/2}\big).
\end{equation}
Consequently, in this case, the adiabatic invariant becomes
\begin{equation}\label{Ii}
    I(\la)=2\int_{x_1}^{+\infty}dx\sqrt{c_+-V(x)}-\frac{2\pi}{\be_+}\sqrt{c_+-\la}+o\big((c_+-\la)^{1/2}\big).
\end{equation}
In the case $ii)$, after a suitable change of the integration variable, the second integral takes the form
\begin{equation}
    \frac{1}{\sqrt{c_+-\la}}\Big(\frac{a_+}{\la-c_+}\Big)^{1/{N_+}}\int_{M((\la-c_+)/a_+)^{1/N_+}}^1\frac{dx x^{N_+/2}}{\sqrt{1-x^{N_+}}} \underset{\la\rightarrow c_+}{\rightarrow}  \frac{\Ga(1/2+1/N_{+})}{\Ga(1/N_+)}\frac{\sqrt{\pi}(-a_+)^{1/N_+}}{(c_+-\la)^{1/2+1/N_+}}.
\end{equation}
As a result, the expansion of the adiabatic invariant in this case reads as
\begin{equation}\label{Iii}
    I(\la)=2\int_{x_1}^{+\infty}dx\sqrt{c_+-V(x)}+\sqrt{\pi}\frac{\Ga(1/N_+-1/2)(c_+-\la)^{1/2-1/N_+}}{\Ga(1/N_+)(-a_+)^{-1/N_+}}+o\big((c_+-\la)^{1/2-1/N_+}\big).
\end{equation}
The principal branches of the multivalued functions are taken in \eqref{Ii}, \eqref{Iii}. If $c_\pm=c$ and $x_1\rightarrow-\infty$, and $x_2\rightarrow+\infty$ then in formulas \eqref{Ii}, \eqref{Iii} the lower limit $x_1$ is replaced by $-\infty$ and the additional contributions to the leading correction to the asymptotics arise from the point $x_1$. These corrections have the same form as the corrections from the point $x_2$, but with the replacement of indices ``$+$'' by ``$-$''.

Now we turn to the cases $b)$ and $c)$. In a general position, the contour is pitched by the two coalescing roots of the equation
\begin{equation}
  V(x)=\la,
\end{equation}
defining the turning points, i.e., in this point $x_0\in \mathbb{C}$, the potential $V(x_0)$ possesses an extremum. The value of $\la$ where this singularity of the function $I(\la)$ occurs is equal to the value of $V(x)$ at the extremum point. Without loss of generality, we can take $x_0=0$ and $V(x_0)=0$. Inasmuch as only two points coalesce, $V''(x_0)=:2\al\neq0$. Then in the vicinity of the point $\la=0$, we have
\begin{equation}
  y_{1,2}=\pm\sqrt{\frac{\la}{\al}}+O(\la),
\end{equation}
where $y_{1,2}$ are the coalescing turning points. In the case $c)$, one of these points coincide with $x_1$ or $x_2$. In the case $b)$, the period is given by
\begin{equation}\label{period_1}
  \tau(\la)=\int_{x_1}^{x_2}dx\Big(\frac{1}{\sqrt{\la-V(x)}}+\frac{\la}{y_1y_2\sqrt{(x-y_1)(x-y_2)}}\Big)-\frac{\la}{y_1y_2}\int_{x_1}^{x_2}\frac{dx}{\sqrt{(x-y_1)(x-y_2)}}.
\end{equation}
We can put $\la=0$ in the first integral, while the second integral is readily calculated. As a result, for $\la\rightarrow0$,
\begin{equation}
  \tau(\la)=\int_{x_1(0)}^{x_2(0)}dx\Big(\frac{1}{\sqrt{-V(x)}}+\frac{\al}{\sqrt{x^2}}\Big)-\al\ln\frac{4\al x_1(0)x_2(0)}{\la}+o(1).
\end{equation}
Integrating this expression over $\la$, we derive the leading terms of the expansion of $I(\la)$ in the vicinity of the singular point $\la=0$. In the case $c)$, we suppose that $x_1=y_2$. Then, writing the period in the form \eqref{period_1}, we have, for $\la\rightarrow0$,
\begin{equation}
  \tau(\la)=\int_{0}^{x_2}dx\Big(\frac{1}{\sqrt{-V(x)}}+\frac{\al}{x}\Big)-\al\ln\Big(2x_2(0)\sqrt{\frac{\al}{\la}}\Big)+o(1),
\end{equation}
whence the expansion of $I(\la)$ follows in this case. As we see, in a general position in the cases $b)$ and $c)$, the adiabatic invariant possesses the branch points of the form $\la\ln\la$.

Let us obtain the leading asymptotics of $I(\la)$ for $|\la|\rightarrow\infty$. At large $\la$, the turning points $x_{1,2}(\la)$ tend to the singular points of the potential $V(x)$ where its values become infinite. Notice that $x_{1,2}(\la)$ do not go to infinity at $|\la|\rightarrow\infty$ in virtue of the condition \eqref{potent_asympt}. Let $x_{1,2}$ tend to the singular points $y_{1,2}$ of the potential $V(x)$, respectively. In the neighbourhood of these points, the potential behaves as
\begin{equation}\label{potent_pole}
  V(x)\approx a_1/(x-y_1)^{N_1}+O((x-y_1)^{1-N_1}),\qquad V(x)\approx a_2/(x-y_2)^{N_2}+O((x-y_2)^{1-N_2}).
\end{equation}
Then the period can be represented as
\begin{equation}\label{adiab_inv_split}
  I(\la)=2\Big(\int_{x_1}^{M_1}+\int^{M_2}_{M_1}+\int_{M_2}^{x_2}\Big)dx\sqrt{\la-V(x)},
\end{equation}
where the points $M_1$ and $M_2$ are chosen sufficiently close to the points $y_1$ and $y_2$ such that, on the segments $[x_1,M_1]$ and $[M_2,x_2]$, the function $V(x)$ is well approximated by the leading term of the expansion \eqref{potent_pole}. The potential is bounded on the interval $[M_1,M_2]$ and so the leading asymptotics at $|\la|\rightarrow\infty$ is easily found
\begin{equation}
  2\int^{M_2}_{M_1}dx\sqrt{\la-V(x)}= 2\sqrt{\la}\int^{M_2}_{M_1}dx+o(\sqrt{\la}).
\end{equation}
In the two rest integrals, one can replace $V(x)$ by its expansion in the vicinity of the singular points. After a suitable change of the variables, the first integral in \eqref{adiab_inv_split} becomes
\begin{equation}
  2\sqrt{\la}\Big(\frac{a_1}{\la}\Big)^{1/N_1}\int_1^{(M_1-y_1)(\la/a_1)^{1/N_1}}dx\sqrt{1-x^{-N}}= 2\sqrt{\la}\int^{M_1}_{y_1}dx+o(\sqrt{\la}).
\end{equation}
The same applies to the second point. As a result, we obtain the leading asymptotics at $|\la|\rightarrow\infty$:
\begin{equation}\label{adiab_inv_asymp}
  I(\la)=2\sqrt{\la}(y_2-y_1)+o(\sqrt{\la}),
\end{equation}
where the principal branch of the square root is chosen. Note that, owing to the property \eqref{roots_schw_p}, this asymptotics obeys the Schwarz symmetry principle as it should be.

Let us discuss in conclusion the analytical properties of the adiabatic invariants corresponding to the infinite motion, viz., when one or both of the turning points $x_{1,2}$ in the integral \eqref{adiab_inv} are replaced by the fixed values $\pm L/2$ corresponding in the quantum case to the boundaries of the segment where the Sturm-Liouville problem is solved. It is not difficult to check that all the above properties also take place for these adiabatic invariants. The only exception is that there is no singularity of $I(\la)$ when some turning point tends to the points $\pm L/2$.

\section{Regularization and renormalization}\label{Reg_Renorm}

Let us give in this appendix the main definitions necessary for us that are related to the regularization and renormalization procedures in QFT. For more details see \cite{Collins.12,Zavyalov,Smirnov}.
\begin{defn}\label{reg_defn}
  The regularization of quantum field theory with the classical action $S$ is such a deformation $S_\La$ of its action that
\renewcommand{\theenumi}{\alph{enumi}}
\renewcommand{\labelenumi}{\theenumi)}
  \begin{enumerate}
    \item In the regularization removal limit $S_\La\rightarrow S$ for $\La\rightarrow+\infty$;
    \item The terms of the perturbation series given by the Feynman diagrams converge for $S_\La$.
  \end{enumerate}
\renewcommand{\theenumi}{\arabic{enumi}}
\renewcommand{\labelenumi}{\theenumi.}
\end{defn}

One can conventionally distinguish the two classes of regularizations:
\begin{enumerate}
    \item The regularizations saving as many symmetries of the initial classical theory as possible;
    \item The regularizations saving a unitarity of the theory.
\end{enumerate}
To the first class, one can attribute the Pauli-Villars regularization and the regularization by higher derivatives. These regularizations keep the Lorentz-invariance intact, but violate the unitarity of a theory. As for the second class, these are the cut-off regularizations and the regularizations modifying the particles' dispersion laws at high energies.

As a rule, it is impossible in relativistic QFT without extended symmetries (such as, for example, the superconformal one) in the four-dimensional spacetime to provide such a regularization of the theory that saves both the Lorentz-invariance and the unitarity. It can be seen from the K\"{a}ll\'{e}n-Lehmann representation of the complete propagator that implies that the asymptotic behaviour of the complete propagator at large momenta is the same as for the bare propagator (see, e.g., \cite{WeinbergB.12}), or from the Weinberg theorem (see, e.g., \cite{Collins.12}) describing the asymptotic behaviour of an arbitrary diagram at large momenta.

For the regularizations schemes preserving unitary but spoiling Lorentz-invariance, one can hope, at least, that, before taking the regularization removal limit, the asymptotic perturbation series resummed appropriately (by the Borel method or more powerful methods) defines the unitary operator relating the operators in the interaction and the Heisenberg pictures. This is \textit{a priori} impossible for the Lorentz-invariant regularizations in virtue of the Haag theorem \cite{Barton,StreatWight}.

\begin{defn}
  The perturbative procedure of addition of the counterterms to the regularized action $S_\La$ providing
\renewcommand{\theenumi}{\alph{enumi}}
\renewcommand{\labelenumi}{\theenumi)}
  \begin{enumerate}
    \item The finiteness of the perturbation series terms in the regularization removal limit $\La\rightarrow+\infty$;
    \item The concordance of QFT with experiments: the physical normalization conditions on the particle masses and coupling constants, the residues of the propagators at poles, etc.;
  \end{enumerate}
\renewcommand{\theenumi}{\arabic{enumi}}
\renewcommand{\labelenumi}{\theenumi.}
  is called renormalization.
\end{defn}
As a rule, the counterterms are chosen to save as many classical symmetries of the action $S$ as possible in QFT in the regularization removal limit provided these symmetries are not broken by other mechanisms which are not concerned with the regularization.

The form of the counterterms is subject to certain restrictions -- the counterterms should be polynomial in the fields and derivatives, viz., they should be the sum (possibly infinite) of the monomials
\begin{equation}\label{countt}
    \partial_i^{[k]}\phi^{[n]},
\end{equation}
where $k$ derivatives are ``distributed'' somehow among $n$ fields (in the case of gravity $\phi=g_{\mu\nu}-\eta_{\mu\nu}$). One can give two arguments, at least, in favor of this restriction. The first argument is that only such counterterms appear not to violate the cluster decomposition principle (\cite{WeinbergB.12}, Sect. 4.4), although this statement is not a theorem. The second argument consists in that such counterterms are sufficient to cancel out all the ultraviolet divergencies of any relativistic QFT (see for details, e.g., \cite{Collins.12}). For renormalizable models, it is also demanded that the mass dimension of a counterterm does not exceed the spacetime dimension. As for non-renormalizable models, this limitation is absent. It is the restriction \eqref{countt} on the form of counterterms that allows one to make non-trivial predictions in the non-renormalizable QFT. Otherwise, any term in the effective action of such a theory can be canceled out by a counterterm.

After renormalization, the perturbation series for the $n$-point Green functions does not depend on the regularization scheme. For the regularizations of the first type, the Lorentz-invariance is preserved non-perturbatively, while the unitarity is restored perturbatively in the regularization removal limit. For the regularizations of the second type, the theory is unitary, but the Lorentz-invariance (and, possibly, other symmetries) is restored perturbatively at every order of the perturbation theory.

Notice that the dimensional and analytical regularizations (see for details, e.g., \cite{Collins.12,Zavyalov}) are not the regularizations in the sense of the definition \ref{reg_defn} since: a) They are not the deformations of the initial classical action and are applied directly to the analytic expressions entering the perturbation series; b) In the regularization removal limit, these schemes do not reproduce the power-like divergencies, i.e. in this limit, the regularized expression does not tend to the initial ill-defined analytic expression for the term of the perturbation series. These regularization schemes can be considered as a combination of the regularization (for instance, by Pauli-Villars) and renormalization procedures in the sense defined above. The dimensional and analytic regularizations are particulary convenient when the logarithmic divergencies of a theory are only needed to be traced. There exist other regularization procedures for the free fields on a curved background (see, e.g., \cite{WaldReg}) which, apparently, are not reduced to the above scheme.

\paragraph{Acknowledgments.}

The work of the authors is partly supported by the RFBR grant No 16-02-00284.

\end{document}